\newcounter{sarrow}
\newcounter{sarrow1}
\newcommand\xnrsquigarrow[1]{%
\stepcounter{sarrow1}%
\mathrel{\begin{tikzpicture}[baseline= {( $ (current bounding box.south) + (0,-0.5ex) $ )}]
\node[inner sep=.5ex] (\thesarrow) {$\scriptstyle #1$};
\path[draw,<-,decorate,
  decoration={zigzag,amplitude=0.7pt,segment length=1.2mm,pre=lineto,pre length=4pt}]
    (\thesarrow1.south east) -- (\thesarrow1.south west);
    $\slashedarrowfill@\relbar\relbar/$
    \end{tikzpicture}}%
}
\def\slashedarrowfill@#1#2#3#4#5{%
  $\m@th\thickmuskip0mu\medmuskip\thickmuskip\thinmuskip\thickmuskip
   \relax#5#1\mkern-7mu%
   \cleaders\hbox{$#5\mkern-2mu#2\mkern-2mu$}\hfill
   \mathclap{#3}\mathclap{#2}%
   \cleaders\hbox{$#5\mkern-2mu#2\mkern-2mu$}\hfill
   \mkern-7mu#4$%
}
\def\rightslashedarrowfillb@{%
  \slashedarrowfill@\relbar\relbar/\rightarrow}
\newcommand\xnrightarrow[2][]{%
  \ext@arrow 0055{\rightslashedarrowfillb@}{#1}{#2}}
\def\rightslashedarrowfille@{%
  \slashedarrowfill@\relbar\relbar/\twoheadrightarrow}
\newcommand\xntworightarrow[2][]{%
  \ext@arrow 0055{\rightslashedarrowfille@}{#1}{#2}}
\def\rightslashedarrowfillg@{%
  \slashedarrowfill@\relbar\relbar{\raisebox{.12em}{}}\twoheadrightarrow}
\newcommand\xtworightarrow[2][]{%
  \ext@arrow 0055{\rightslashedarrowfillg@}{#1}{#2}}
\def\rightslashedarrowfillx@{%
  \slashedarrowfill@\Relbar\Relbar/\rightrightarrows}
\newcommand\xnTworightarrow[2][]{%
  \ext@arrow 0055{\rightslashedarrowfillx@}{#1}{#2}}
\def\rightslashedarrowfilly@{%
  \slashedarrowfill@\Relbar\Relbar{\raisebox{.12em}{}}\rightrightarrows}
\newcommand\xTworightarrow[2][]{%
  \ext@arrow 0055{\rightslashedarrowfilly@}{#1}{#2}}
\tikzset{nomorepostaction/.code=\let\tikz@postactions\pgfutil@empty}
\newcommand{\sembrack}[1]{[\![#1]\!]}
\providecommand{\fatbar}{\mkern1mu\mathpalette\v@rectangle\relax\mkern1mu}
\newcommand{\v@rectangle}[2]{%
  \hbox{
  \fboxrule=0.5\fontdimen 8
    \ifx#1\displaystyle\textfont\else
    \ifx#1\textstyle\textfont\else
    \ifx#1\scriptstyle\scriptfont\else
    \scriptscriptfont\fi\fi\fi 3
  \fboxsep=-\fboxrule
  \fbox{$\m@th#1\phantom{(}$}%
  }
}
\newtheorem{theorem}{Theorem}[section]
\newtheorem{definition}[theorem]{Definition}
\newtheorem{proposition}[theorem]{Proposition}
\newtheorem{lemma}[theorem]{Lemma}
\begin{document}

\begin{titlepage}
\thispagestyle{empty}

\hrule
\begin{center}
{\bf\LARGE Structured Parallel Programming\\}
%
\vspace{0.7cm}
--- Yong Wang ---

\vspace{2cm}
\begin{figure}[!htbp]
 \centering
 \includegraphics[width=1.0\textwidth]{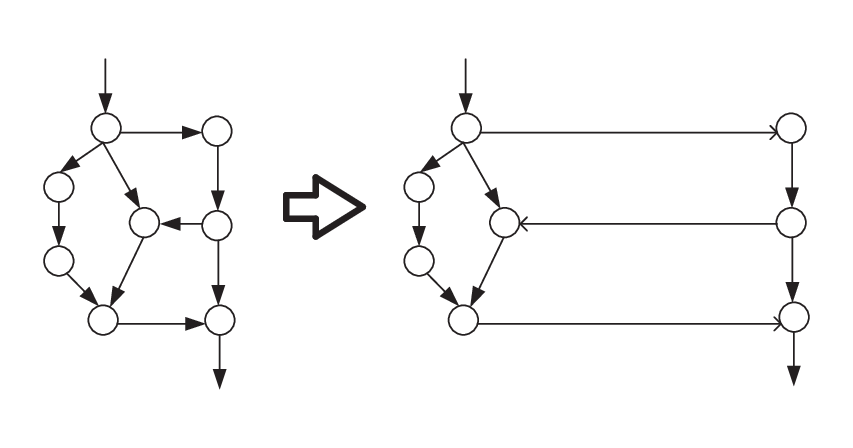}
\end{figure}

\end{center}
\end{titlepage}

\newpage 

\setcounter{page}{1}\pagenumbering{roman}

\tableofcontents

\newpage

\setcounter{page}{1}\pagenumbering{arabic}

        \section{Introduction}\label{intro}

Parallel computing \cite{GREEN} \cite{PEC} is becoming more and more important. Traditional parallelism often existed in distributed computing, since distributed systems are usually
autonomous and local computer is single-core and single-processor and timed (Timed computing is serial in nature). Today, due to the progress of hardware, multi-cores, multi-processors, and GPUs make the local computer truly parallel.

Parallel programming language has a relatively long research history. There have been always two ways: one is the structured way, and the other is the graph-based (true concurrent) way. The structured way is often based on the interleaving semantics, such as process algebra CCS. Since the parallelism in interleaving semantics is not a fundamental computational pattern (the parallel operator can be replaced by alternative composition and sequential composition), the parallel operator often does not occur as an explicit operator, such as the mainstream programming languages C, C++, Java, et al.

The graph-based way is also called true concurrency \cite{CM} \cite{ES1} \cite{ES2}. There also have been some ways to structure the graph \cite{DDP} \cite{SOP}, but these work only considered the causal relation in the graph, and neglected the confliction and even the communication. And there are also industrial efforts to adopt the graph-based way, such as the workflow description language WSFL. The later workflow description language BPEL adopts both the structured way and the graph-based way. Why does BPEL not adopt the structured way only? It is because that the expressive power of the structured way is limited. Then why does BPEL not adopt the graph-based way only? It is just because that the graph could not be structured at that time and the structured way is the basis on implementing a compiler.

We did some work on truly concurrent process algebra \cite{APTC}, which proved that truly concurrent process algebra is a generalization of traditional process algebra and had a side effect on the structuring true concurrency.

Now, it is the time to do some work on structured parallel programming under the background of programming language and parallel software engineering. On one side, traditional structured programming got great successes in sequential computation \cite{GOTO} \cite{SP}; on the other side, current structured parallel programming focused on parallel patterns (also known as parallel skeletons, templates, archetypes) \cite{PP0} \cite{PP1} \cite{PP2} \cite{PP3} \cite{PP4}, with comparison to structured sequential programming, the corresponding structured parallel programming with solid foundation still is missing.

In this paper, we try to clarify structured parallel programming corresponding to traditional structured sequential programming. This paper is organized as follows. In section \ref{pac}, we introduce the backgrounds of structured and unstructured parallelism. We introduce a parallel programming language called PPL in section \ref{ppl}.

\newpage\section{Parallelism and Concurrency}\label{pac}

In this section, we analyze the concepts of parallelism and concurrency, unstructured parallelism and structured parallelism.

We introduce unstructured parallelism in section \ref{uptc}, structured parallelism in section \ref{sp}, and the way from unstructured parallelism to structured parallelism in section \ref{upsp}. In section \ref{fusp}, we give the foundation of unstructured and structured parallel computation.

\subsection{Unstructured Parallelism - True Concurrency}\label{uptc}

True concurrency is usually defined by a graph-like structure \cite{ES1} \cite{ES2}, such as DAG (Directed Acyclic Graph), Petri net and event structure. As follows, we give the definition of Prime Event Structure.

\begin{definition}[Prime event structure]\label{PES}
Let $\Lambda$ be a fixed set of labels, ranged over $a,\cdots$. A ($\Lambda$-labelled) prime event structure is a tuple $\mathcal{E}=\langle \mathbb{E}, \leq, \sharp, \lambda\rangle$, where $\mathbb{E}$ is a denumerable set of events. Let $\lambda:\mathbb{E}\rightarrow\Lambda$ be a labelling function. And $\leq$, $\sharp$ are binary relations on $\mathbb{E}$, called causality and conflict respectively, such that:

\begin{enumerate}
  \item $\leq$ is a partial order and $\lceil e \rceil = \{e'\in \mathbb{E}|e'\leq e\}$ is finite for all $e\in \mathbb{E}$.
  \item $\sharp$ is irreflexive, symmetric and hereditary with respect to $\leq$, that is, for all $e,e',e''\in \mathbb{E}$, if $e\sharp e'\leq e''$, then $e\sharp e''$.
\end{enumerate}

Then, the concepts of consistency and concurrency can be drawn from the above definition:

\begin{enumerate}
  \item $e,e'\in \mathbb{E}$ are consistent, denoted as $e\frown e'$, if $\neg(e\sharp e')$. A subset $X\subseteq \mathbb{E}$ is called consistent, if $e\frown e'$ for all $e,e'\in X$.
  \item $e,e'\in \mathbb{E}$ are concurrent, denoted as $e\parallel e'$, if $\neg(e\leq e')$, $\neg(e'\leq e)$, and $\neg(e\sharp e')$.
\end{enumerate}
\end{definition}

In the Prime Event Structure defined true concurrency, we can see that there exist two kinds of unstructured relations: causality and confliction. Figure \ref{EOTC1} and Figure \ref{EOTC3} illustrates these two kinds of concurrency (for the simplicity, we separate the causal relation and the conflict relation).

Figure \ref{EOTC1} illustrates an example of primitives (atomic actions, events) with causal relations. Note that, primitives, atomic actions and events are almost the same concepts
under different backgrounds of computer science, and we will use them with no differences.

\begin{figure}
    \centering
    \includegraphics{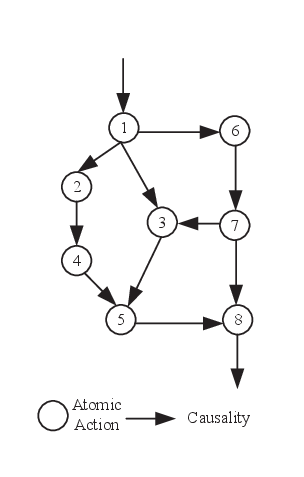}
    \caption{An example of unstructured parallelism}
    \label{EOTC1}
\end{figure}

Figure \ref{EOTC3} illustrates an example of atomic actions with causal relations and conflict relations. There exists a conflict relation between the second action in the left parallel branch and the second action in the right parallel branch, if the condition $b$ is $\mathbf{true}$, then the second action and its subsequent actions in the left branch can execute, else the second action and its subsequent actions in the right branch will execute.

\begin{figure}
    \centering
    \includegraphics{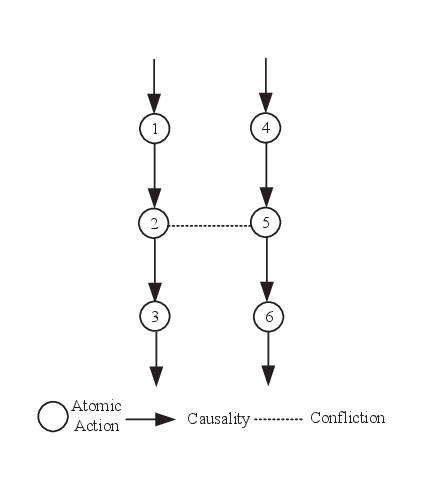}
    \caption{Another example of unstructured parallelism}
    \label{EOTC3}
\end{figure}

\subsection{Structured Parallelism}\label{sp}

Comparing to structured programming in sequential computation \cite{GOTO} \cite{SP}, we can intuitionally add a structured parallelism to the existed three basic programming structures (sequence, choice and iteration) of structured sequential programming, to form four basic programming structures of structured parallel programming: sequence, choice, iteration and parallelism. The intuitions and flow charts of the four basic structures are as follow.

The intuition of sequence (;) of two clauses $e_1;e_2$ is that after the successful execution of $e_1$, $e_2$ executes. The corresponding flow chat is shown in Figure \ref{FCOS}.

\begin{figure}
    \centering
    \includegraphics{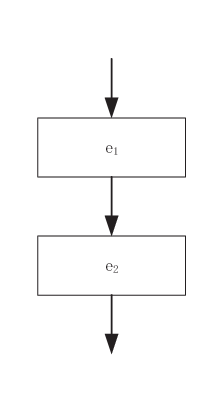}
    \caption{Sequence structure}
    \label{FCOS}
\end{figure}

The intuition of choice if $(b)$ then $e_1$ else $e_2$ is that if the condition $b$ is $\mathbf{true}$, then $e_1$ executes, else $e_2$ executes. The corresponding flow chat is shown in Figure \ref{FCOC}.

\begin{figure}
    \centering
    \includegraphics{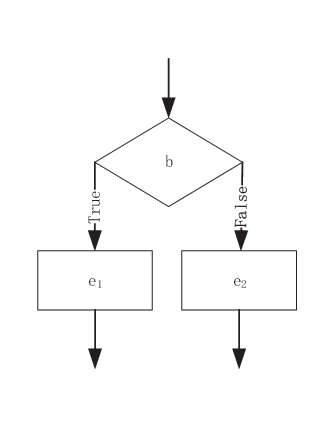}
    \caption{Choice structure}
    \label{FCOC}
\end{figure}

The intuition of iteration while $(b)$ do $e_1$ is that while the condition $b$ is $\mathbf{true}$, then $e_1$ executes many times. The corresponding flow chat is shown in Figure \ref{FCOI}.

\begin{figure}
    \centering
    \includegraphics{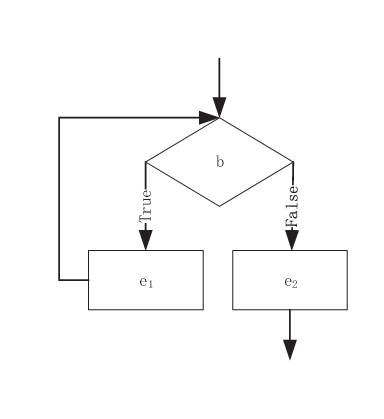}
    \caption{Iteration structure}
    \label{FCOI}
\end{figure}

The intuition of parallelism ($\parallel$) of two clauses $e_1\parallel e_2$ is that $e_1$ and $e_2$ execute simultaneously. The corresponding flow chat is shown in Figure \ref{FCOP}.

\begin{figure}
    \centering
    \includegraphics{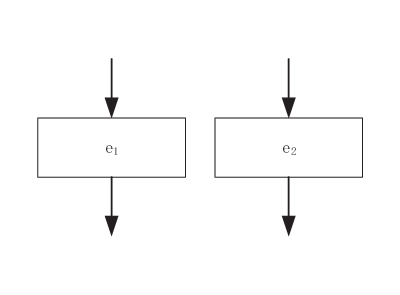}
    \caption{Parallelism structure}
    \label{FCOP}
\end{figure}

The programming of atomic actions, mixed by the above four structures is called structured parallel programming. We define Structured Parallel Program inductively as follows.

\begin{definition}[Structured parallel program]
Let the set of all primitives denote $\mathbb{P}$. A Structured Parallel Program $SPP$ is inductively defined as follows:

\begin{enumerate}
  \item $\mathbb{P}\subset SPP$;
  \item If $e_1\in SPP$ and $e_2\in SPP$, then $e_1;e_2\in SPP$;
  \item If $b$ is a condition, $e_1\in SPP$ and $e_2\in SPP$, then if $(b)$ then $e_1$ else $e_2\in SPP$;
  \item If $b$ is a condition, $e\in SPP$, then while $(b)$ do $e\in SPP$;
  \item If $e_1\in SPP$ and $e_2\in SPP$, then $e_1\parallel e_2\in SPP$.
\end{enumerate}
\end{definition}

\subsection{From Unstructured Parallelism to Structured Parallelism}\label{upsp}

The examples in Figure \ref{EOTC1} and \ref{EOTC3} are two kinds of typical unstructured parallelism. In this section, we try to structure these unstructured parallelism.

Firstly, the unstructured causalities in the same parallel branch can be structured by the famous conclusion that Goto statement is harmful \cite{GOTO} and also the similarly well-known structured (sequential) programming \cite{SP}; and for unstructured causalities, we find the example in Figure \ref{EOTC1} can not be structured, and the proof is stated in the following conclusions.

\begin{proposition}
The example in Figure \ref{EOTC1} can not be structured.
\end{proposition}

\begin{proof}
The actions 3 and 6 have the same causal pioneer 1, they should be in different parallel branches. But, the action 6 is the causal pioneer of the action 3 through the action 7, so, they should be in the same parallel branch. These cause contradictions.
\end{proof}

How can we deal this situation? Yes, we can classify the causal relations into two kinds: one is traditional sequential causality, and the other is the communication between different parallel branches, since the causality between parallel branches being communication is reasonable. Figure \ref{EOTC2} is the causality-classified one originated from Figure \ref{EOTC1}. This classification should be clarified during modelling time, that is, the programmer should draw Figure \ref{EOTC2} directly, instead of drawing Figure \ref{EOTC1} and then transforming it to Figure \ref{EOTC2}, in the modelling phase. Note that, multi-parties communications can be steadied by a series of two-parties communications without any loss.

\begin{figure}
    \centering
    \includegraphics{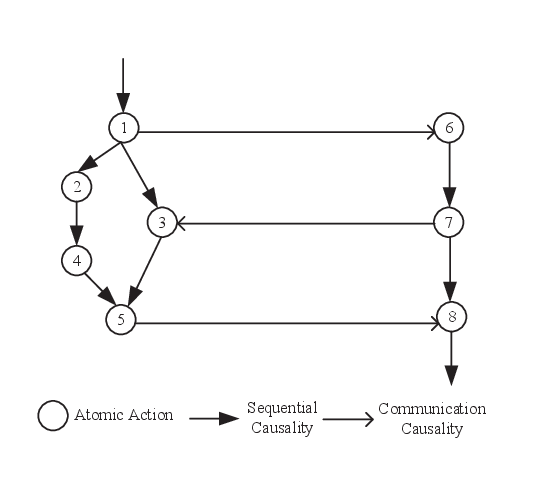}
    \caption{An example of structuring unstructured parallelism}
    \label{EOTC2}
\end{figure}

Then the causality-classified parallelism can be structured, we show the structuring way of synchronous and asynchronous communications.

For synchronous communication, The program corresponding to Figure \ref{EOTC1} can be written as follows:

$$(1;((2;4)\parallel 3);5)\parallel(6;7;8)$$

with three unstructured communications $sc_{1,6}$, $sc_{7,3}$ and $sc_{5,8}$.

The above program can be structured and equivalent to the following program:

$$sc_{1,6};((2;4)\parallel sc_{7,3});sc_{5,8}$$

We can see that the above program is structured, though the equivalence of the above two programs is not obvious. We will explain it through an rigorous way in the following chapters.

For asynchronous communication, the program corresponding to Figure \ref{EOTC1} can be written as follows:

$$(1;((2;4)\parallel 3);5)\parallel(6;7;8)$$

with three unstructured constraints $1\leq 6$, $7\leq 3$ and $5\leq 8$. Note that, $\leq$ is the causal relation.

The above program can be structured and equivalent to the following program:

$(1;((2;4)\parallel \textrm{if }(7\leq 3)\textrm{ then }3\textrm{ else }\mathbf{skip});5)
\parallel(\textrm{if }(1\leq 6)\textrm{ then }6\textrm{ else }\mathbf{skip};7;\textrm{if }(5\leq 8)\textrm{ then }8\textrm{ else }\mathbf{skip})$.

Note that $\mathbf{skip}$ is a voidness primitive.

The above conditions, like $1\leq 6$, $7\leq 3$ and $5\leq 8$, are not based on the the traditional results of data manipulation. Asynchronous communications are usually implemented by inserting an intermediate data structure, like mailbox or queue, between the two communicating partners, so, the above conditions can be the results of checking the data structure if the data are received in the data structure by the receiver. If the receiver has the ability to be blocked until the data are received, then the above conditions can be removed, and the structured program is the original one:

$$(1;((2;4)\parallel 3);5)\parallel(6;7;8)$$

without any constraint.


Then, it is turn to consider the unstructured conflictions between different parallel branches, since it is already proven that conflictions in the same parallel branch can be structured \cite{SP}, as the choice structure is a kind of structured confliction. Figure \ref{EOTC3} illustrates this kind of unstructured conflictions and can be expressed by the following program:

$$(1;2;3)\parallel (4;5;6)$$

with an unstructured confliction $2\sharp 5$, and a condition $b$, if $b$ is $\mathbf{true}$ then the primitive 2 and its successors execute, else the primitive 5 and its successors execute.

Figure \ref{EOTC3} can be structured by Figure \ref{EOTC4}. The structured program corresponding to Figure \ref{EOTC4} is:

$\textrm{if (b) then }(1;2;3)\parallel 4 \textrm{ else } 1\parallel (4;5;6)$


\begin{figure}
    \centering
    \includegraphics{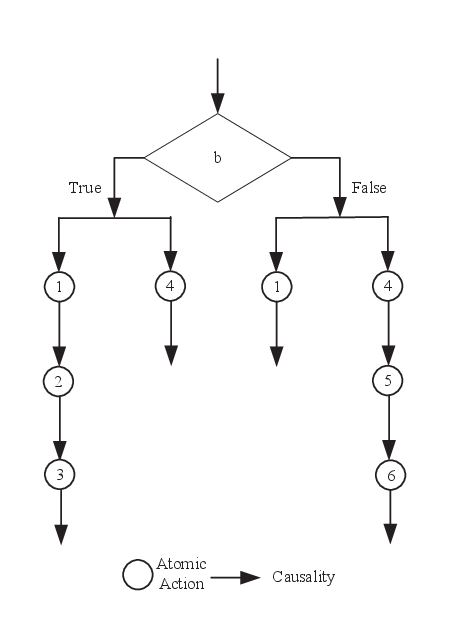}
    \caption{Another example of structuring unstructured parallelism}
    \label{EOTC4}
\end{figure}

\subsection{Foundation of Unstructured and Structured Parallelism}\label{fusp}

There existed several parallel machines \cite{PTM} \cite{CPTM} to provide the foundation for unstructured and structured parallelism since quite long time ago. Among them, the one (or multi)-tapes multi-heads Turing machine called PTM (Parallel Turing Machine) \cite{PTM} provides an intuitive foundation. The unstructured causalities and conflicts can be modelled as communications among the tape heads.

Prather \cite{STM} builded the so-called structured Turing machines with the four basic structures (sequence, choice, iteration and parallelism), which can realize every partial recursive function by a structured connection of simple machines.

\newpage\section{A Parallel Programming Language}\label{ppl}

In this section, we design a detailed parallel programming language, abbreviated PPL. PPL includes the four basic structures: sequence, choice, iteration and parallelism, and also non-determinism, communications (causalities between different parallel branches) and conflictions between different parallel branches. Note that, for the integrity, the semantics of traditional parts are also involved.

In section \ref{ppls}, we give the syntax of PPL. We give the operational semantics and denotational semantics in section \ref{pplos} and \ref{pplds}, and the relation between them in section \ref{pplrs}, we give the axiomatic semantics in section \ref{pplas}. We discuss non-determinism in section \ref{gc}, communications in section \ref{comm} and conflictions in section \ref{conf}, and the structuring algorithm in section \ref{sa}.

\subsection{Syntax}\label{ppls}

The syntactic sets of PPL are as follows.

\begin{itemize}
  \item Numbers set $\mathbf{N}$, with positive, negative integers and zero, and $n,m\in\mathbf{N}$;
  \item Truth values set $\mathbf{T}$, with values $\{\mathbf{true},\mathbf{false}\}$;
  \item Storage locations $\mathbf{Loc}$, and $X,Y\in\mathbf{Loc}$;
  \item Arithmetic expressions $\mathbf{Aexp}$, and $a\in\mathbf{Aexp}$;
  \item Boolean expressions $\mathbf{Bexp}$, and $b\in\mathbf{Bexp}$;
  \item Commands $\mathbf{Com}$, and $c\in\mathbf{Com}$.
\end{itemize}

The formation rules of PPL are:

For $\mathbf{Aexp}$:

$$a::=n\quad |\quad X\quad|\quad a_0+a_1\quad|\quad a_0-a_1\quad|\quad a_0\times a_1$$

For $\mathbf{Bexp}$:

$$b::=\mathbf{true}\quad |\quad \mathbf{false}\quad|\quad a_0=a_1\quad|\quad a_0\leq a_1\quad|\quad \neg b\quad|\quad b_0\wedge b_1\quad|\quad b_0\vee b_1$$

For $\mathbf{Com}$:

$$c::=\mathbf{skip}\quad |\quad X:=a\quad|\quad c_0;c_1\quad|\quad
\textbf{if } b\textbf{ then } c_0\textbf{ else } c_1\quad|\quad \textbf{while } b\textbf{ do } c\quad|\quad c_0\parallel c_1$$

We see that the syntax of PPL is almost same to traditional imperative language, except for the explicit parallel operator $\parallel$ in $\mathbf{Com}$.

\subsection{Operational Semantics}\label{pplos}

The set of states $\Sigma$ are composed of $\sigma:\mathbf{Loc}\rightarrow \mathbf{N}$, so, $\sigma(X)$ is the values of storage location $X$ under the state $\sigma$. For more about operational semantics, please refer to Plotkin's book \cite{SOS}.

In this section, we give the operational semantics of PPL.

\subsubsection{Operational Rules of $\mathbf{Aexp}$}

$\langle a,\sigma\rangle$ is called the configuration of arithmetic expression $a$, while $\langle a,\sigma\rangle\rightarrow n$ denotes that the value of $a$ is $n$ under the state $\sigma$.

The evaluation rule of integer $n$:

$$\langle n,\sigma\rangle\rightarrow n$$

The evaluation rule of storage location $X$:

$$\langle X,\sigma\rangle\rightarrow \sigma(X)$$

The evaluation rule of sums:

$$\frac{\langle a_0,\sigma\rangle\rightarrow n_0\quad\langle a_1,\sigma\rangle\rightarrow n_1}{\langle a_0+a_1,\sigma\rangle\rightarrow n},n=n_0+n_1$$

The evaluation rule of subtractions:

$$\frac{\langle a_0,\sigma\rangle\rightarrow n_0\quad\langle a_1,\sigma\rangle\rightarrow n_1}{\langle a_0-a_1,\sigma\rangle\rightarrow n},n=n_0-n_1$$

The evaluation rule of products:

$$\frac{\langle a_0,\sigma\rangle\rightarrow n_0\quad\langle a_1,\sigma\rangle\rightarrow n_1}{\langle a_0\times a_1,\sigma\rangle\rightarrow n},n=n_0\times n_1$$

Then we can define the following equivalence $\sim$ as follows.

\begin{definition}[Equivalence of operational semantics for arithmetic expressions]
$a_0\sim a_1$ iff $\forall n\in\mathbf{N}, \forall \sigma\in\Sigma.\langle a_0,\sigma\rangle\rightarrow n\Leftrightarrow\langle a_1,\sigma\rangle\rightarrow n$.
\end{definition}

\subsubsection{Operational Rules of $\mathbf{Bexp}$}

The evaluation rule of $\mathbf{true}$:

$$\langle \mathbf{true},\sigma\rangle\rightarrow \mathbf{true}$$

The evaluation rule of $\mathbf{false}$:

$$\langle \mathbf{false},\sigma\rangle\rightarrow \mathbf{false}$$

The evaluation rule of equality:

$$\frac{\langle a_0,\sigma\rangle\rightarrow n_0\quad\langle a_1,\sigma\rangle\rightarrow n_1}{\langle a_0=a_1,\sigma\rangle\rightarrow \mathbf{true}},n_0=n_1$$

$$\frac{\langle a_0,\sigma\rangle\rightarrow n_0\quad\langle a_1,\sigma\rangle\rightarrow n_1}{\langle a_0=a_1,\sigma\rangle\rightarrow \mathbf{false}},n_0\neq n_1$$

The evaluation rule of $\leq$:

$$\frac{\langle a_0,\sigma\rangle\rightarrow n_0\quad\langle a_1,\sigma\rangle\rightarrow n_1}{\langle a_0\leq a_1,\sigma\rangle\rightarrow \mathbf{true}},n_0\leq n_1$$

$$\frac{\langle a_0,\sigma\rangle\rightarrow n_0\quad\langle a_1,\sigma\rangle\rightarrow n_1}{\langle a_0\leq a_1,\sigma\rangle\rightarrow \mathbf{false}},n_0\geq n_1$$

The evaluation rule of $\neg$:

$$\frac{\langle b,\sigma\rangle\rightarrow \mathbf{true}}{\langle \neg b,\sigma\rangle\rightarrow \mathbf{false}}$$

$$\frac{\langle b,\sigma\rangle\rightarrow \mathbf{false}}{\langle \neg b,\sigma\rangle\rightarrow \mathbf{true}}$$

The evaluation rule of $\wedge$:

$$\frac{\langle b_0,\sigma\rangle\rightarrow t_0\quad\langle b_1,\sigma\rangle\rightarrow t_1}{\langle b_0\wedge b_1,\sigma\rangle\rightarrow t},t=\mathbf{true}, t_0\equiv\mathbf{true}\wedge t_1\equiv\mathbf{true};t=\mathbf{false},otherwise$$

The evaluation rule of $\vee$:

$$\frac{\langle b_0,\sigma\rangle\rightarrow t_0\quad\langle b_1,\sigma\rangle\rightarrow t_1}{\langle b_0\vee b_1,\sigma\rangle\rightarrow t},t=\mathbf{true}, t_0\equiv\mathbf{true}\vee t_1\equiv\mathbf{true};t=\mathbf{false},otherwise$$

Then we can define the following equivalence $\sim$ as follows.

\begin{definition}[Equivalence of operational semantics for boolean expressions]
$b_0\sim b_1$ iff $\forall t\in\mathbf{T}, \forall \sigma\in\Sigma.\langle b_0,\sigma\rangle\rightarrow t\Leftrightarrow\langle b_1,\sigma\rangle\rightarrow t$.
\end{definition}

\subsubsection{Operational Rules for $\mathbf{Com}$}

$\langle c,\sigma\rangle$ denotes the configuration of the command $c$, which means that the command $c$ executes under the state $\sigma$. And $\langle c\sigma\rangle\rightarrow \sigma'$ means that the command $c$ executing under the state $\sigma$ evolves to the state $\sigma'$. For $n\in\mathbf{N}$ and $X\in\mathbf{Loc}$, $\sigma[n/X]$ denotes using $n$ to replace the contents of $X$ under the state $\sigma$.

The execution rule of $\mathbf{skip}$:

$$\langle \mathbf{skip},\sigma\rangle\rightarrow \sigma$$

The execution rule of assignment:

$$\frac{\langle a,\sigma\rangle\rightarrow n}{\langle X:=a,\sigma\rangle\rightarrow \sigma[n/X]}$$

The execution rule of sequence:

$$\frac{\langle c_0,\sigma\rangle\rightarrow \sigma'}{\langle c_0;c_1,\sigma\rangle\rightarrow \langle c_1,\sigma'\rangle}$$

The execution rule of choice:

$$\frac{\langle b,\sigma\rangle \rightarrow \mathbf{true}\quad \langle c_0,\sigma\rangle\rightarrow \sigma'}{\langle \textbf{if }b\textbf{ then } c_0\textbf{ else }c_1,\sigma\rangle\rightarrow\sigma'}$$

$$\frac{\langle b,\sigma\rangle \rightarrow \mathbf{false}\quad \langle c_1,\sigma\rangle\rightarrow \sigma'}{\langle \textbf{if }b\textbf{ then } c_0\textbf{ else }c_1,\sigma\rangle\rightarrow\sigma'}$$

The execution rule of iteration:

$$\frac{\langle b,\sigma\rangle\rightarrow\mathbf{false}}{\langle\textbf{while }b\textbf{ do }c,\sigma\rangle\rightarrow \sigma}$$

$$\frac{\langle b,\sigma\rangle\rightarrow\mathbf{true}\quad \langle c,\sigma\rangle\rightarrow\sigma''\quad\langle \textbf{while }b\textbf{ do }c,\sigma''\rangle\rightarrow\sigma'}{\langle\textbf{while }b\textbf{ do }c,\sigma\rangle\rightarrow \sigma'}$$

The execution rule of parallelism:

%

$$\frac{\langle c_1,\sigma\rangle\rightarrow\sigma'\quad \langle c_0,\sigma\rangle\rightarrow\sigma''}{\langle c_0\parallel c_1,\sigma\rangle\rightarrow\sigma'\uplus\sigma''}$$

where $\sigma'\uplus\sigma''$ is the final states after $c_0$ and $c_1$ execute simultaneously.


Note that, for true concurrency, there are still three other properties should be processed: communications, conflictions, and race conditions (we leave them to the next section).

\begin{enumerate}
  \item Communication is occurring between two atomic communicating commands, which can be defined by a communication function $\gamma(c_0, c_1)\triangleq c(c_0,c_1)$. Communications can be implemented by several ways: share storage locations, invocation of functions by values, or network communications. For a pure imperative programming lange, we only consider the case of share storage locations, so, there is no need to define new communicating commands. So, two commands in communication is with a relation
  $\gamma(c_0, c_1)\triangleq c(c_0,c_1)$, but rules of $c_0\parallel c_1$ are still the same to the above ones. We will discuss the general communications in section \ref{comm};
  \item Confliction may have two forms: one exists as the condition rules define; the other may exist among the parallel branches, which must be eliminated. But the elimination of confliction existing in parallel branches may lead to non-deterministic results (refer to \cite{APTC} for details). For simplicity, we assume that the programs written by PPL at this time have no conflictions, because a program with the conflictions existing among parallel branches have an equal program without conflicts. That is, the conflictions can be eliminated and structured, and we will discuss the elimination of conflictions between parallel branches in section \ref{conf};
  \item Race condition may exist in two parallel commands, for example, they are all executing assignment to a same storage location. Two parallel commands in race condition must be executed serially. We should define new rules for race condition, but, these rules also lead to non-deterministic results. So, we also assume that the programs written by PPL deal with this situation and the non-deterministic execution is eliminated. In fact, we can write $c_0\parallel (\mathbf{skip};c_1)$ or $(\mathbf{skip};c_0)\parallel c_1$, or put $c_0,c_1$ in a condition, where $c_0$ and $c_1$ are in race condition. But, indeed, the above parallelism is still can be used widely in non-sharing memory computation (distributed computing), or non-racing of sharing memory computation. For the general form of non-determinism, we will discuss in section \ref{gc} and \ref{comm}.
\end{enumerate}

We can get the following propositions. Where $\sim$ is an equivalence relation on commands by the definition, where $\Sigma$ is the set of states:

\begin{definition}[Equivalence of operational semantics for commands]
$c_0\sim c_1\textrm{ iff } \forall\sigma,\sigma'\in\Sigma, \langle c_0,\sigma\rangle\rightarrow \sigma'\Leftrightarrow\langle c_1,\sigma\rangle\rightarrow \sigma'$
\end{definition}

\begin{proposition}
$c_0\parallel c_1\sim c_1\parallel c_0$, for $c_0,c_1\in\mathbf{Com}$.
\end{proposition}

\begin{proof}
By use of the transition rules of $\parallel$, we can get the following derivations of $c_0\parallel c_1$ for $\forall \sigma\in\Sigma$:

%

$$\frac{\langle c_1,\sigma\rangle\rightarrow\sigma'\quad \langle c_0,\sigma\rangle\rightarrow\sigma''}{\langle c_0\parallel c_1,\sigma\rangle\rightarrow\sigma'\uplus\sigma''}$$

And we can get the following derivations of $c_1\parallel c_0$ for $\forall \sigma\in\Sigma$:

%

$$\frac{\langle c_0,\sigma\rangle\rightarrow\sigma'\quad \langle c_1,\sigma\rangle\rightarrow\sigma''}{\langle c_1\parallel c_0,\sigma\rangle\rightarrow\sigma'\uplus\sigma''}$$

So, it is obvious that $c_0\parallel c_1\sim c_1\parallel c_0$, for $c_0,c_1\in\mathbf{Com}$, as desired.
\end{proof}

\begin{proposition}
$(c_0\parallel c_1)\parallel c_2\sim c_0\parallel (c_1\parallel c_2)$, for $c_0,c_1,c_2\in\mathbf{Com}$.
\end{proposition}

\begin{proof}
By use of the transition rules of $\parallel$, we can get the following derivations of $(c_0\parallel c_1)\parallel c_2$ for $\forall \sigma\in\Sigma$:

$$\frac{\langle c_0,\sigma\rangle\rightarrow\sigma'\quad \langle c_1,\sigma\rangle\rightarrow\sigma''\quad \langle c_2,\sigma\rangle\rightarrow\sigma'''}{\langle (c_0\parallel c_1)\parallel c_2,\sigma\rangle\rightarrow\sigma'\uplus\sigma''\uplus\sigma'''}$$

And we can get the following derivations of $c_0\parallel (c_1\parallel c_2)$ for $\forall \sigma\in\Sigma$:

$$\frac{\langle c_0,\sigma\rangle\rightarrow\sigma'\quad \langle c_1,\sigma\rangle\rightarrow\sigma''\quad \langle c_2,\sigma\rangle\rightarrow\sigma'''}{\langle c_0\parallel (c_1\parallel c_2),\sigma\rangle\rightarrow\sigma'\uplus\sigma''\uplus\sigma'''}$$

So, it is obvious that $(c_0\parallel c_1)\parallel c_2\sim c_0\parallel (c_1\parallel c_2)$, for $c_0,c_1,c_2\in\mathbf{Com}$, as desired.
\end{proof}

\begin{proposition}
$(\textbf{if } b\textbf{ then } c_0\textbf{ else } c_1)\parallel c_2 \sim \textbf{if } b\textbf{ then } c_0\parallel c_2\textbf{ else } c_1\parallel c_2$, for $c_0,c_1,c_2\in\mathbf{Com}$
\end{proposition}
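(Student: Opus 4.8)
The plan is to unfold the definition of $\sim$ and prove that for all $\sigma,\sigma'\in\Sigma$ we have $\langle(\textbf{if } b\textbf{ then } c_0\textbf{ else } c_1)\parallel c_2,\sigma\rangle\rightarrow\sigma'$ if and only if $\langle\textbf{if } b\textbf{ then } c_0\parallel c_2\textbf{ else } c_1\parallel c_2,\sigma\rangle\rightarrow\sigma'$. First I would recall from \cite{FS} the standard big-step rules for the conditional: evaluation of $\textbf{if } b\textbf{ then } c_0\textbf{ else } c_1$ in $\sigma$ proceeds via $c_0$ when $\langle b,\sigma\rangle\rightarrow\mathbf{true}$ and via $c_1$ when $\langle b,\sigma\rangle\rightarrow\mathbf{false}$. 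Since every boolean expression evaluates to a definite value in each state, I would split the argument into the two exhaustive cases on the value of $b$ in the initial state $\sigma$, and show that in each case both sides reduce to the same transition, which yields the biconditional in both directions at once.

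Consider the case $\langle b,\sigma\rangle\rightarrow\mathbf{true}$, the case $\mathbf{false}$ being symmetric with $c_1$ in place of $c_0$. On the right-hand side, the conditional rule immediately reduces the goal to $\langle c_0\parallel c_2,\sigma\rangle\rightarrow\sigma'$. On the left-hand side, writing $C$ for the conditional, the two rules for $\parallel$ require both orderings of $C$ and $c_2$ to reach $\sigma'$: there must be $\sigma''$ with $\langle C,\sigma\rangle\rightarrow\sigma''$ and $\langle c_2,\sigma''\rangle\rightarrow\sigma'$, and there must be $\sigma'''$ with $\langle c_2,\sigma\rangle\rightarrow\sigma'''$ and $\langle C,\sigma'''\rangle\rightarrow\sigma'$. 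In the first ordering the guard is read in $\sigma$, so $\langle C,\sigma\rangle\rightarrow\sigma''$ is exactly $\langle c_0,\sigma\rangle\rightarrow\sigma''$; in the second ordering, however, the guard is read in the post-$c_2$ state $\sigma'''$, so $\langle C,\sigma'''\rangle\rightarrow\sigma'$ coincides with $\langle c_0,\sigma'''\rangle\rightarrow\sigma'$ only when $\langle b,\sigma'''\rangle\rightarrow\mathbf{true}$ as well.

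This last point is where the real content lies, and it is the step I expect to be the main obstacle: the parallel semantics forces the guard $b$ to be inspected at two different moments — before $c_2$ in one interleaving and after $c_2$ in the other — whereas on the right-hand side the branch is fixed once and for all by the value of $b$ in $\sigma$. The two sides therefore agree only when $c_2$ cannot flip the guard, i.e.\ when $c_2$ writes no storage location occurring in $b$. I would discharge this by appealing to the standing assumption of Section~\ref{os2} that IPPL programs contain no conflicts and no unresolved race conditions among parallel branches; under that hypothesis $c_2$ does not interfere with the variables of $b$, so $\langle b,\sigma\rangle\rightarrow\mathbf{true}$ forces $\langle b,\sigma'''\rangle\rightarrow\mathbf{true}$. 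With the guard value preserved, $\langle C,\sigma'''\rangle\rightarrow\sigma'$ collapses to $\langle c_0,\sigma'''\rangle\rightarrow\sigma'$, both orderings of the left-hand side become precisely the two orderings defining $\langle c_0\parallel c_2,\sigma\rangle\rightarrow\sigma'$, and the biconditional for this case follows. The $\mathbf{false}$ case is identical with $c_1$, completing the proof.
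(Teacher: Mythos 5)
Your proposal is correct, but it is far more than what the paper does: the paper's entire proof is the sentence ``By use of the transition rules of condition and $\parallel$, it is quite trivial and we omit it,'' so there is no actual argument to compare against. Your unfolding shows that the claim is in fact \emph{not} trivial, and you put your finger on exactly the right spot: in the interleaving where $c_2$ runs first, the guard $b$ is evaluated in the post-$c_2$ state $\sigma'''$ on the left-hand side, whereas the right-hand side fixes the branch by the value of $b$ in $\sigma$. Without an interference-freedom hypothesis the proposition is genuinely false: take $b\equiv(X=0)$ with $\sigma(X)=0$, $c_2\equiv X:=1$, $c_0\equiv\mathbf{skip}$, $c_1\equiv X:=2$; then the right-hand side converges to $\sigma[1/X]$ in both orderings, while on the left-hand side the two orderings produce $\sigma[1/X]$ and $\sigma[2/X]$ respectively, so the simultaneous-satisfaction requirement of the $\parallel$ rules can never be met and the left-hand side diverges from the right. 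Your move of discharging this by the standing assumptions of Section~\ref{os2} (no conflicts among parallel branches, race conditions eliminated) is the only way to rescue the statement, and it makes explicit a side condition the paper leaves implicit --- strictly speaking, the equivalence should be stated only for $c_2$ non-interfering with the locations of $b$. In short: your proof is sound where the paper offers none, and it additionally documents that the ``trivial'' proposition silently depends on the non-interference assumption, which is valuable information the paper's one-line dismissal conceals.
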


\begin{proof}
By use of the transition rules of choice and $\parallel$, we can get the following derivations of $(\textbf{if } b\textbf{ then } c_0\textbf{ else } c_1)\parallel c_2$:

%

$$\frac{\langle b,\sigma\rangle \rightarrow \mathbf{true}\quad \langle c_0,\sigma\rangle\rightarrow \sigma'\quad \langle c_2,\sigma\rangle \rightarrow \sigma''}{\langle (\textbf{if }b\textbf{ then } c_0\textbf{ else }c_1)\parallel c_2,\sigma\rangle\rightarrow\sigma'\uplus\sigma''}$$

%

$$\frac{\langle b,\sigma\rangle \rightarrow \mathbf{false}\quad \langle c_1,\sigma\rangle\rightarrow \sigma'\quad \langle c_2,\sigma\rangle \rightarrow \sigma''}{\langle (\textbf{if }b\textbf{ then } c_0\textbf{ else }c_1)\parallel c_2,\sigma\rangle\rightarrow\sigma'\uplus\sigma''}$$

And we can get the following derivations of $\textbf{if } b\textbf{ then } c_0\parallel c_2\textbf{ else } c_1\parallel c_2$:

%

$$\frac{\langle b,\sigma\rangle \rightarrow \mathbf{true}\quad \langle c_0,\sigma\rangle\rightarrow \sigma'\quad \langle c_2,\sigma\rangle \rightarrow \sigma''}{\langle (\textbf{if } b\textbf{ then } c_0\parallel c_2\textbf{ else } c_1\parallel c_2,\sigma\rangle\rightarrow\sigma'\uplus\sigma''}$$

%

$$\frac{\langle b,\sigma\rangle \rightarrow \mathbf{false}\quad \langle c_1,\sigma\rangle\rightarrow \sigma'\quad \langle c_2,\sigma\rangle \rightarrow \sigma''}{\langle (\textbf{if } b\textbf{ then } c_0\parallel c_2\textbf{ else } c_1\parallel c_2,\sigma\rangle\rightarrow\sigma'\uplus\sigma''}$$

So, it is obvious that $(\textbf{if } b\textbf{ then } c_0\textbf{ else } c_1)\parallel c_2 \sim \textbf{if } b\textbf{ then } c_0\parallel c_2\textbf{ else } c_1\parallel c_2$, for $c_0,c_1,c_2\in\mathbf{Com}$, as desired.
\end{proof}

\begin{proposition}\label{SePa}
For $c_0,c_1,c_2,c_3\in\mathbf{Com}$,

\begin{enumerate}
  \item $(c_0;c_1)\parallel c_2\sim (c_0\parallel c_2);c_1$;
  \item $(c_0;c_1)\parallel (c_2;c_3)\sim (c_0\parallel c_2);(c_1\parallel c_3)$.
\end{enumerate}

\end{proposition}

\begin{proof}
(1) By use of the transition rules of sequence and $\parallel$, we can get the following derivations of $(c_0;c_1)\parallel c_2$:

$$\frac{\langle c_0,\sigma\rangle\rightarrow\sigma'\quad \langle c_2,\sigma\rangle\rightarrow\sigma''}{\langle (c_0;c_1)\parallel c_2,\sigma\rangle\rightarrow\langle c_1,\sigma'\uplus\sigma''\rangle}$$

And we can get the following derivations of $(c_0\parallel c_2);c_1$:

$$\frac{\langle c_0,\sigma\rangle\rightarrow\sigma'\quad \langle c_2,\sigma\rangle\rightarrow\sigma''}{\langle (c_0\parallel c_2);c_1,\sigma\rangle\rightarrow\langle c_1,\sigma'\uplus\sigma''\rangle}$$

So, it is obvious that $(c_0;c_1)\parallel c_2\sim (c_0\parallel c_2);c_1$, for $c_0,c_1,c_2\in\mathbf{Com}$, as desired.

(2) By use of the transition rules of sequence and $\parallel$, we can get the following derivations of $(c_0;c_1)\parallel (c_2;c_3)$:

$$\frac{\langle c_0,\sigma\rangle\rightarrow\sigma'\quad \langle c_2,\sigma\rangle\rightarrow\sigma''}{\langle (c_0;c_1)\parallel (c_2;c_3),\sigma\rangle\rightarrow\langle c_1\parallel c_3,\sigma'\uplus\sigma''\rangle}$$

And we can get the following derivations of $(c_0\parallel c_2);(c_1\parallel c_3)$:

$$\frac{\langle c_0,\sigma\rangle\rightarrow\sigma'\quad \langle c_2,\sigma\rangle\rightarrow\sigma''}{\langle (c_0\parallel c_2);(c_1\parallel c_3),\sigma\rangle\rightarrow\langle c_1\parallel c_3,\sigma'\uplus\sigma''\rangle}$$

So, it is obvious that $(c_0;c_1)\parallel (c_2;c_3)\sim (c_0\parallel c_2);(c_1\parallel c_3)$, for $c_0,c_1,c_2,c_3\in\mathbf{Com}$, as desired.
\end{proof}

\begin{proposition}\label{skipP}
$c\parallel \mathbf{skip}\sim c$, for $c\in\mathbf{Com}$.
\end{proposition}

\begin{proof}
By use of the transition rules of $\mathbf{skip}$ and $\parallel$, we can get the following derivations of $c\parallel \mathbf{skip}$:

$$\frac{\langle c,\sigma\rangle\rightarrow\sigma'\quad \langle \mathbf{skip},\sigma\rangle\rightarrow\sigma}{c\parallel \mathbf{skip},\sigma\rangle\rightarrow\sigma'\uplus\sigma}$$

And it is obvious that:

$$\frac{\langle c,\sigma\rangle\rightarrow\sigma'}{c,\sigma\rangle\rightarrow\sigma'}$$

For $\sigma'\uplus\sigma=\sigma'$, it is obvious that $c\parallel \mathbf{skip}\sim c$, for $c\in\mathbf{Com}$, as desired.
\end{proof}

\begin{lemma}\label{skipP2}
For  $c_0,c_1\in\mathbf{Com}$,

\begin{enumerate}
  \item $c_0\parallel c_1\sim c_0\parallel (\mathbf{skip};c_1)\sim c_0;c_1$;
  \item $c_0\parallel c_1\sim (\mathbf{skip};c_0)\parallel c_1\sim c_1;c_0$.
\end{enumerate}
\end{lemma}

\begin{proof}
It is obvious by Proposition \ref{SePa} and \ref{skipP}.
\end{proof}

From Lemma \ref{skipP2}, we can see that the execution orders of $c_0\parallel c_1$ cause non-determinism, they can be executed in any sequential order or in parallel simultaneously. But, without race condition, the final states after the execution of $c_0\parallel c_1$ are deterministic.

\subsection{Denotational Semantics}\label{pplds}

Denotational semantics can be used to describe the semantics of PPL. For more about denotational semantics, please refer to Mosses's book \cite{DS}.

In this section, we give the denotational semantics for PPL.

\subsubsection{Denotational Semantics of $\mathbf{Aexp}$}

We define the denotational semantics of $\mathbf{Aexp}$ as $\mathcal{A}:\mathbf{Aexp}\rightarrow(\Sigma\rightarrow \mathbf{N})$. The concrete denotational semantics of $\mathbf{Aexp}$ are following.

$\mathcal{A}\sembrack{n}=\{(\sigma,n)|\sigma\in\Sigma\}$

$\mathcal{A}\sembrack{X}=\{(\sigma,\sigma(X))|\sigma\in\Sigma\}$

$\mathcal{A}\sembrack{a_0+a_1}=\{(\sigma,n_0+n_1)|(\sigma,n_0)\in\mathcal{A}\sembrack{a_0}\&(\sigma,n_1)\in\mathcal{A}\sembrack{a_1}\}$

$\mathcal{A}\sembrack{a_0-a_1}=\{(\sigma,n_0-n_1)|(\sigma,n_0)\in\mathcal{A}\sembrack{a_0}\&(\sigma,n_1)\in\mathcal{A}\sembrack{a_1}\}$

$\mathcal{A}\sembrack{a_0\times a_1}=\{(\sigma,n_0\times n_1)|(\sigma,n_0)\in\mathcal{A}\sembrack{a_0}\&(\sigma,n_1)\in\mathcal{A}\sembrack{a_1}\}$

\subsubsection{Denotational Semantics of $\mathbf{Bexp}$}

We define the denotational semantics of $\mathbf{Bexp}$ as $\mathcal{B}:\mathbf{Bexp}\rightarrow(\Sigma\rightarrow \mathbf{T})$. The concrete denotational semantics of $\mathbf{Bexp}$
are following.

$\mathcal{B}\sembrack{\mathbf{true}}=\{(\sigma,\mathbf{true})|\sigma\in\Sigma\}$

$\mathcal{B}\sembrack{\mathbf{false}}=\{(\sigma,\mathbf{false})|\sigma\in\Sigma\}$

$\mathcal{B}\sembrack{a_0=a_1}=\{(\sigma,\mathbf{true})|\sigma\in\Sigma\&\mathcal{A}\sembrack{a_0}\sigma=\mathcal{A}\sembrack{a_1}\sigma\}\cup
\{(\sigma,\mathbf{false})|\sigma\in\Sigma\&\mathcal{A}\sembrack{a_0}\sigma\neq\mathcal{A}\sembrack{a_1}\sigma\}$

$\mathcal{B}\sembrack{a_0\leq a_1}=\{(\sigma,\mathbf{true})|\sigma\in\Sigma\&\mathcal{A}\sembrack{a_0}\sigma\leq\mathcal{A}\sembrack{a_1}\sigma\}\cup
\{(\sigma,\mathbf{false})|\sigma\in\Sigma\&\mathcal{A}\sembrack{a_0}\sigma\nleq\mathcal{A}\sembrack{a_1}\sigma\}$

$\mathcal{B}\sembrack{\neg b}=\{(\sigma,\neg_Tt)|\sigma\in\Sigma\&(\sigma,t)\in\mathcal{B}\sembrack{b}\}$

$\mathcal{B}\sembrack{b_0\wedge b_1}=\{(\sigma,t_0\wedge_T t_1)|\sigma\in\Sigma\&(\sigma,t_0)\in\mathcal{B}\sembrack{b_0}\&(\sigma,t_1)\in\mathcal{B}\sembrack{b_1}\}$

$\mathcal{B}\sembrack{b_0\vee b_1}=\{(\sigma,t_0\vee_T t_1)|\sigma\in\Sigma\&(\sigma,t_0)\in\mathcal{B}\sembrack{b_0}\&(\sigma,t_1)\in\mathcal{B}\sembrack{b_1}\}$

\subsubsection{Denotational Semantics of $\mathbf{Com}$}

We define the denotational semantics of $\mathbf{Com}$ as $\mathcal{C}:\mathbf{Com}\rightarrow(\Sigma\rightarrow \Sigma)$. The denotational semantics of $\mathbf{Com}$ are following.

$\mathcal{C}\sembrack{\mathbf{skip}}=\{(\sigma,\sigma)|\sigma\in\Sigma\}$

$\mathcal{C}\sembrack{X:=a}=\{(\sigma,\sigma[n/X])|\sigma\in\Sigma\&n=\mathcal{A}\sembrack{a}\sigma\}$

$\mathcal{C}\sembrack{c_0;c_1}=\mathcal{C}\sembrack{c_1}\circ\mathcal{C}\sembrack{c_0}$

$\mathcal{C}\sembrack{\textrm{if }b\textrm{ then }c_0\textrm{ else}c_1}=\{(\sigma,\sigma')|\mathcal{B}\sembrack{b}\sigma=\mathbf{true}\&(\sigma,\sigma')\in\mathcal{C}\sembrack{c_0}\}\cup\\
\{(\sigma,\sigma')|\mathcal{B}\sembrack{b}\sigma=\mathbf{false}\&(\sigma,\sigma')\in\mathcal{C}\sembrack{c_1}\}$

$\mathcal{C}\sembrack{\textrm{while }b\textrm{ do }c}=fix(\Gamma)$

with $\Gamma(\phi)=\{(\sigma,\sigma')|\mathcal{B}\sembrack{b}\sigma=\mathbf{true}\&(\sigma,\sigma')\in\phi\circ\mathcal{C}\sembrack{c}\}\cup\\
\{(\sigma,\sigma')|\mathcal{B}\sembrack{b}\sigma=\mathbf{false}\}$

$\mathcal{C}\sembrack{c_0\parallel c_1}=\mathcal{C}\sembrack{c_0}\}\cup \{\mathcal{C}\sembrack{c_1}$

We can get the following propositions.

\begin{proposition}
$\mathcal{C}\sembrack{c_0\parallel c_1}=\mathcal{C}\sembrack{c_1\parallel c_0}$, for $c_0,c_1\in\mathbf{Com}$.
\end{proposition}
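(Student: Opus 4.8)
The plan is to reduce the claimed equality directly to the commutativity of set union, since the denotation of $\parallel$ was defined precisely as a union of two singletons. First I would unfold the definition of the denotational semantics of the parallel operator on the left-hand side, obtaining
$$\mathcal{C}\sembrack{c_0\parallel c_1}=\{\mathcal{C}\sembrack{c_0}\}\cup \{\mathcal{C}\sembrack{c_1}\}.$$
Then I would apply the same definition to the right-hand side, which yields
$$\mathcal{C}\sembrack{c_1\parallel c_0}=\{\mathcal{C}\sembrack{c_1}\}\cup \{\mathcal{C}\sembrack{c_0}\}.$$

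The two resulting expressions differ only in the order in which the singleton sets $\{\mathcal{C}\sembrack{c_0}\}$ and $\{\mathcal{C}\sembrack{c_1}\}$ are combined. Since set union is commutative, we have $\{\mathcal{C}\sembrack{c_0}\}\cup \{\mathcal{C}\sembrack{c_1}\}=\{\mathcal{C}\sembrack{c_1}\}\cup \{\mathcal{C}\sembrack{c_0}\}$, and chaining the three equalities gives the desired result immediately.

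There is essentially no obstacle in this argument; the only point worth flagging is conceptual rather than technical. We never need to reason about the partial functions $\mathcal{C}\sembrack{c_0}$ and $\mathcal{C}\sembrack{c_1}$ themselves, nor about how their composition acts on a state $\sigma\in\Sigma$. Their internal behaviour is irrelevant, because the denotation packages them as unordered elements of a set, and it is exactly this set-theoretic packaging that renders commutativity automatic. This is in marked contrast to the operational side, where the corresponding proposition $c_0\parallel c_1\sim c_1\parallel c_0$ must be justified through the symmetry of the transition rules for $\parallel$; here the symmetry is instead inherited for free from the symmetry of union.
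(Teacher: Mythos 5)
Your proof is correct and takes essentially the same approach as the paper, which also justifies the proposition directly from the definition of the denotation of $\parallel$ (declaring the result trivial and omitting the details). You have simply written out the unfolding to the union of singletons and the appeal to commutativity of $\cup$ that the paper leaves implicit.
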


\begin{proof}
By the definition of the denotation of $\parallel$, we can get:

$\mathcal{C}\sembrack{c_0\parallel c_1}=\mathcal{C}\sembrack{c_0}\cup \mathcal{C}\sembrack{c_1}$

$\mathcal{C}\sembrack{c_1\parallel c_0}=\mathcal{C}\sembrack{c_1}\cup \mathcal{C}\sembrack{c_0}$

So, $\mathcal{C}\sembrack{c_0\parallel c_1}=\mathcal{C}\sembrack{c_1\parallel c_0}$, for $c_0,c_1\in\mathbf{Com}$, as desired.
\end{proof}

\begin{proposition}
$\mathcal{C}\sembrack{(c_0\parallel c_1)\parallel c_2}=\mathcal{C}\sembrack{c_0\parallel (c_1\parallel c_2)}$, for $c_0,c_1,c_2\in\mathbf{Com}$.
\end{proposition}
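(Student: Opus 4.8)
The plan is to unfold the defining clause for the parallel denotation on each side of the claimed identity and then reduce the equality to the associativity of set union. Recall the definition $\mathcal{C}\sembrack{c\parallel c'}=\{\mathcal{C}\sembrack{c}\}\cup\{\mathcal{C}\sembrack{c'}\}$, so that the denotation of a parallel command is the set collecting the denotations of its two branches.

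First I would expand the left-hand side. One application of the definition to the outer operator gives $\mathcal{C}\sembrack{(c_0\parallel c_1)\parallel c_2}=\{\mathcal{C}\sembrack{c_0\parallel c_1}\}\cup\{\mathcal{C}\sembrack{c_2}\}$, and a second application to the inner parallel command replaces $\mathcal{C}\sembrack{c_0\parallel c_1}$ by $\{\mathcal{C}\sembrack{c_0}\}\cup\{\mathcal{C}\sembrack{c_1}\}$. Symmetrically the right-hand side unfolds to $\{\mathcal{C}\sembrack{c_0}\}\cup\{\mathcal{C}\sembrack{c_1\parallel c_2}\}$ with $\mathcal{C}\sembrack{c_1\parallel c_2}=\{\mathcal{C}\sembrack{c_1}\}\cup\{\mathcal{C}\sembrack{c_2}\}$. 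Once both sides are written purely in terms of the three branch denotations $\mathcal{C}\sembrack{c_0}$, $\mathcal{C}\sembrack{c_1}$, $\mathcal{C}\sembrack{c_2}$ joined by $\cup$, I would finish by invoking the associativity of union, $(A\cup B)\cup C=A\cup(B\cup C)$; combined with its commutativity this also recovers the preceding commutativity proposition as a special case.

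The step I expect to require the most care is not the algebra of $\cup$ but the bookkeeping of the singleton braces when a branch is itself a parallel command. Taken literally, the two groupings produce the differently nested objects $\{\{\mathcal{C}\sembrack{c_0},\mathcal{C}\sembrack{c_1}\},\mathcal{C}\sembrack{c_2}\}$ and $\{\mathcal{C}\sembrack{c_0},\{\mathcal{C}\sembrack{c_1},\mathcal{C}\sembrack{c_2}\}\}$, which are not equal as raw sets. To make the proposition go through I would therefore first fix the convention that the denotation of a parallel command is always flattened to a single set of per-branch partial functions---equivalently, that the outer $\{\cdot\}$ acts as the identity on a denotation that is already a set. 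Under this reading both sides collapse to the flat collection $\{\mathcal{C}\sembrack{c_0},\mathcal{C}\sembrack{c_1},\mathcal{C}\sembrack{c_2}\}$ and the equality is immediate; stating that flattening convention explicitly is the one point I would not leave implicit, since without it the nesting defeats associativity.
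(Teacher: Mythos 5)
Your proposal is correct and takes essentially the route the paper intends: the paper's own proof is just the one-line remark that the claim follows trivially from the definition of the denotation of $\parallel$, which is exactly your plan of unfolding $\mathcal{C}\sembrack{c\parallel c'}=\{\mathcal{C}\sembrack{c}\}\cup\{\mathcal{C}\sembrack{c'}\}$ on both sides and appealing to associativity of union. Your extra care about the singleton braces is warranted and goes beyond the paper: taken literally, the definition produces exactly the unequal nested sets you exhibit (and does not even yield a value of the declared type $\Sigma\rightharpoonup\Sigma$), so the flattening convention you insist on stating explicitly is indeed silently assumed by the paper's omitted ``trivial'' argument.
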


\begin{proof}
By the definition of the denotation of $\parallel$, we can get:

$\mathcal{C}\sembrack{(c_0\parallel c_1)\parallel c_2}=(\mathcal{C}\sembrack{c_0}\cup \mathcal{C}\sembrack{c_1})\cup \mathcal{C}\sembrack{c_2}$

$\mathcal{C}\sembrack{c_0\parallel (c_1\parallel c_2)}=\mathcal{C}\sembrack{c_0}\cup (\mathcal{C}\sembrack{c_1}\cup \mathcal{C}\sembrack{c_2})$

$\mathcal{C}\sembrack{(c_0\parallel c_1)\parallel c_2}=\mathcal{C}\sembrack{c_0\parallel (c_1\parallel c_2)}$, for $c_0,c_1,c_2\in\mathbf{Com}$, as desired.
\end{proof}

\begin{proposition}
$\mathcal{C}\sembrack{(\textbf{if } b\textbf{ then } c_0\textbf{ else } c_1)\parallel c_2}=\mathcal{C}\sembrack{\textbf{if } b\textbf{ then } c_0\parallel c_2\textbf{ else } c_1\parallel c_2}$, for $c_0,c_1,c_2\in\mathbf{Com}$
\end{proposition}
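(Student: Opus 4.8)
The plan is to unfold both sides using the denotation of $\parallel$ together with the standard denotation of the conditional (as in \cite{FS}), and then to reconcile the resulting set-valued expressions. Write $\mathcal{B}\sembrack{b}$ for the semantic function of the boolean expression $b$, and set $B_{\mathbf{t}}=\{\sigma\in\Sigma:\mathcal{B}\sembrack{b}\sigma=\mathbf{true}\}$ and $B_{\mathbf{f}}=\{\sigma\in\Sigma:\mathcal{B}\sembrack{b}\sigma=\mathbf{false}\}$, so that the denotation of a conditional is obtained by restricting each branch to the corresponding set of guard states and taking the union.

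First I would expand the left-hand side. Applying the denotation of $\parallel$ at the top level gives
$$\mathcal{C}\sembrack{(\textbf{if } b\textbf{ then } c_0\textbf{ else } c_1)\parallel c_2}=\{\mathcal{C}\sembrack{\textbf{if } b\textbf{ then } c_0\textbf{ else } c_1}\}\cup\{\mathcal{C}\sembrack{c_2}\},$$
a two-element set whose first member is the conditional denotation $\mathcal{C}\sembrack{c_0}|_{B_{\mathbf{t}}}\cup\mathcal{C}\sembrack{c_1}|_{B_{\mathbf{f}}}$ and whose second member is $\mathcal{C}\sembrack{c_2}$. Next I would expand the right-hand side: the outer conditional restricts $\mathcal{C}\sembrack{c_0\parallel c_2}$ to $B_{\mathbf{t}}$ and $\mathcal{C}\sembrack{c_1\parallel c_2}$ to $B_{\mathbf{f}}$, and each of these inner denotations is again a two-element set by the definition of $\parallel$.

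The crucial step --- and the one place where care is needed --- is to fix how the conditional combinator acts on these set-valued denotations: it acts branch-wise on the parallel components, aligning the $c_0/c_1$ component of the two branches and aligning the common $c_2$ component. Under this reading the $c_2$-components coincide on both branches, so their conditional combination collapses to the single function $\mathcal{C}\sembrack{c_2}$, while the $c_0/c_1$-components combine to exactly $\mathcal{C}\sembrack{c_0}|_{B_{\mathbf{t}}}\cup\mathcal{C}\sembrack{c_1}|_{B_{\mathbf{f}}}=\mathcal{C}\sembrack{\textbf{if } b\textbf{ then } c_0\textbf{ else } c_1}$.

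The main obstacle is precisely this alignment: because the denotation of $\parallel$ is a set rather than a single partial function, the conditional must be lifted to sets, and one must justify that the shared branch $c_2$ is not duplicated or split across $B_{\mathbf{t}}$ and $B_{\mathbf{f}}$ but reassembled into the whole of $\mathcal{C}\sembrack{c_2}$ (using that $B_{\mathbf{t}}$ and $B_{\mathbf{f}}$ partition the states on which $b$ is defined). Once the branch-wise lifting is fixed, both sides reduce to the same two-element set $\{\mathcal{C}\sembrack{\textbf{if } b\textbf{ then } c_0\textbf{ else } c_1},\mathcal{C}\sembrack{c_2}\}$, and the equality follows immediately.
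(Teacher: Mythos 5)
Your attempt correctly isolates the real obstacle, which the paper's own proof skips entirely (it reads, in full: ``By the definition of the denotation of condition and $\parallel$, it is quite trivial and we omit it''): since $\mathcal{C}\sembrack{c_0\parallel c_1}=\{\mathcal{C}\sembrack{c_0}\}\cup\{\mathcal{C}\sembrack{c_1}\}$ is a \emph{set} of partial functions, the standard conditional combinator does not type-check on it and must be lifted. But your resolution is not a proof: the ``branch-wise'' lifting you stipulate is not well defined, let alone forced by the paper's definitions. The two-element set $\{\mathcal{C}\sembrack{c_0},\mathcal{C}\sembrack{c_2}\}$ is unordered and retains no record of which element came from which parallel branch (it can even degenerate to a singleton when $\mathcal{C}\sembrack{c_0}=\mathcal{C}\sembrack{c_2}$), so ``aligning the $c_0/c_1$ component and the common $c_2$ component'' is not an operation on the denotations at all --- it appeals to syntactic information that the set construction has already discarded. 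Under the other natural liftings the equality fails: for instance, lifting the conditional element-wise over all pairs of members produces cross terms such as $\mathcal{C}\sembrack{c_0}|_{B_{\mathbf{t}}}\cup\mathcal{C}\sembrack{c_2}|_{B_{\mathbf{f}}}$, which have no counterpart on the left-hand side. In effect you chose the lifting so as to make the proposition true, which repairs the semantics rather than deriving the result from it.

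The gap is not pedantic, because under the reading of $\mathcal{C}\sembrack{c_0\parallel c_1}$ that the paper actually uses in its later soundness and completeness arguments --- both interleaved compositions $\mathcal{C}\sembrack{c_1}\circ\mathcal{C}\sembrack{c_0}$ and $\mathcal{C}\sembrack{c_0}\circ\mathcal{C}\sembrack{c_1}$ applied to a state, required to agree --- the proposition is simply false whenever $c_2$ can change the truth value of $b$. Take $b\equiv(X=0)$, $c_0\equiv\mathbf{skip}$, $c_1\equiv Y:=1$, $c_2\equiv X:=1$, and $\sigma$ with $\sigma(X)=0$, $\sigma(Y)=0$: on the left, the interleaving that runs $c_2$ first evaluates $b$ in a state where it is false and executes $Y:=1$ (yielding $X=1$, $Y=1$), while the other interleaving yields $X=1$, $Y=0$, so the two orders disagree and the left side is undefined at $\sigma$; on the right, the branch is fixed by $b$ at $\sigma$, and $\mathbf{skip}\parallel X:=1$ gives $X=1$, $Y=0$ in both orders, so the right side is defined. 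Hence no proof along your lines can succeed without first \emph{replacing} the paper's denotation of $\parallel$ by one that genuinely supports the equation; you should either state that replacement as an explicit definition and prove the proposition for it, or record that the proposition fails under the semantics as given. (Your subsidiary observation --- that the $c_2$ components reassemble to all of $\mathcal{C}\sembrack{c_2}$ because boolean evaluation in IPPL is total, so $B_{\mathbf{t}}\cup B_{\mathbf{f}}=\Sigma$ --- is correct, but it sits downstream of the unjustified alignment step.)
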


\begin{proof}
By the definition of the denotation of choice and $\parallel$, we can get:

$\mathcal{C}\sembrack{(\textbf{if } b\textbf{ then } c_0\textbf{ else } c_1)\parallel c_2}=\{(\sigma,\sigma')|\mathcal{B}\sembrack{b}\sigma=\mathbf{true}\&(\sigma,\sigma')\in\mathcal{C}\sembrack{c_0}\}\cup\\
\{(\sigma,\sigma')|\mathcal{B}\sembrack{b}\sigma=\mathbf{false}\&(\sigma,\sigma')\in\mathcal{C}\sembrack{c_1}\}\cup \mathcal{C}\sembrack{c_2}$

$\mathcal{C}\sembrack{\textbf{if } b\textbf{ then } c_0\parallel c_2\textbf{ else } c_1\parallel c_2}=\{(\sigma,\sigma')|\mathcal{B}\sembrack{b}\sigma=\mathbf{true}\&(\sigma,\sigma')\in\mathcal{C}\sembrack{c_0}\cup \mathcal{C}\sembrack{c_2}\}\cup\\
\{(\sigma,\sigma')|\mathcal{B}\sembrack{b}\sigma=\mathbf{false}\&(\sigma,\sigma')\in\mathcal{C}\sembrack{c_1}\cup \mathcal{C}\sembrack{c_2}\}$

So, $\mathcal{C}\sembrack{(\textbf{if } b\textbf{ then } c_0\textbf{ else } c_1)\parallel c_2}=\mathcal{C}\sembrack{\textbf{if } b\textbf{ then } c_0\parallel c_2\textbf{ else } c_1\parallel c_2}$, for $c_0,c_1,c_2\in\mathbf{Com}$, as desired.
\end{proof}

\begin{proposition}\label{SePa2}
For $c_0,c_1,c_2,c_3\in\mathbf{Com}$,

\begin{enumerate}
  \item $\mathcal{C}\sembrack{(c_0;c_1)\parallel c_2}=\mathcal{C}\sembrack{(c_0\parallel c_2);c_1}$;
  \item $\mathcal{C}\sembrack{(c_0;c_1)\parallel (c_2;c_3)}=\mathcal{C}\sembrack{(c_0\parallel c_2);(c_1\parallel c_3)}$.
\end{enumerate}

\end{proposition}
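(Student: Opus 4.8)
The plan is to prove both identities by directly unfolding the two denotational clauses that are relevant here: the standard clause for sequencing, $\mathcal{C}\sembrack{c;c'}=\mathcal{C}\sembrack{c'}\circ\mathcal{C}\sembrack{c}$ (inherited from \cite{FS}), and the clause for the parallel operator, $\mathcal{C}\sembrack{c\parallel c'}=\{\mathcal{C}\sembrack{c}\}\cup\{\mathcal{C}\sembrack{c'}\}$. First I would fix the reading of a sequential composition whose left argument is a parallel command: post-composing a continuation with a set-valued denotation is carried out componentwise, so that $\mathcal{C}\sembrack{(c\parallel c'');c'}$ is obtained by composing $\mathcal{C}\sembrack{c'}$ after each element of the set $\mathcal{C}\sembrack{c\parallel c''}$. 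With this convention settled, each side of the two claimed equalities becomes an explicit finite collection of partial functions built from $\mathcal{C}\sembrack{c_0},\ldots,\mathcal{C}\sembrack{c_3}$ by composition, and the task reduces to checking that the two collections coincide.

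For part (1) I would expand the left-hand side to $\{\mathcal{C}\sembrack{c_1}\circ\mathcal{C}\sembrack{c_0}\}\cup\{\mathcal{C}\sembrack{c_2}\}$ and the right-hand side to $\{\mathcal{C}\sembrack{c_1}\circ\mathcal{C}\sembrack{c_0}\}\cup\{\mathcal{C}\sembrack{c_1}\circ\mathcal{C}\sembrack{c_2}\}$, and then invoke the standing assumptions of Section \ref{os2}: the programs considered are free of conflicts and of race conditions between parallel branches, so the denotations of commands sitting in distinct branches commute under composition and act on disjoint fragments of the store. Under this commuting property the two branch behaviours describe the same state transformation regardless of the order in which $c_1$ and $c_2$ are scheduled, which is exactly what is meant to collapse the residual discrepancy between the two sets. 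Part (2) follows the same pattern: both $\mathcal{C}\sembrack{(c_0;c_1)\parallel(c_2;c_3)}$ and $\mathcal{C}\sembrack{(c_0\parallel c_2);(c_1\parallel c_3)}$ unfold, via associativity of $\circ$ and the componentwise convention, to collections of composites of $\mathcal{C}\sembrack{c_0},\mathcal{C}\sembrack{c_1},\mathcal{C}\sembrack{c_2},\mathcal{C}\sembrack{c_3}$, and the cross-branch commuting property identifies the two collections.

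The main obstacle I anticipate is not the bookkeeping but making this convention rigorous and reconciling it with the partiality of the denotations. Because $\mathcal{C}\sembrack{c}$ lives in $\Sigma\rightharpoonup\Sigma$, every equality between the assembled sets is really an equality of partial functions, so one must track domains of definedness; indeed a naive set equality already fails for part (1), since $\mathcal{C}\sembrack{c_2}$ and $\mathcal{C}\sembrack{c_1}\circ\mathcal{C}\sembrack{c_2}$ differ in general, and it is precisely the conflict-free, race-free hypothesis that must be used to equate the intended behaviours on the states where both schedulings terminate. A clean way to discharge this, which I would adopt if the purely computational argument proves slippery, is to lean on the intended agreement between the denotational and operational readings of $\parallel$ (both orders of execution must succeed and yield the same state, as stipulated in Section \ref{os2}); the corresponding operational equivalences have already been recorded in the operational-semantics proposition, and transporting them through that agreement yields the two denotational equalities without re-deriving the interleaving analysis by hand.
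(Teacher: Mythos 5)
Your expansion of both sides is correct, and your alarm about it is warranted, but the repair your main line leans on does not work. The conflict-free, race-free hypothesis buys you at most commutation of cross-branch denotations, $\mathcal{C}\sembrack{c_1}\circ\mathcal{C}\sembrack{c_2}=\mathcal{C}\sembrack{c_2}\circ\mathcal{C}\sembrack{c_1}$, never the absorption $\mathcal{C}\sembrack{c_1}\circ\mathcal{C}\sembrack{c_2}=\mathcal{C}\sembrack{c_2}$ that part (1) would need: the sets $\{\mathcal{C}\sembrack{c_1}\circ\mathcal{C}\sembrack{c_0},\,\mathcal{C}\sembrack{c_2}\}$ and $\{\mathcal{C}\sembrack{c_1}\circ\mathcal{C}\sembrack{c_0},\,\mathcal{C}\sembrack{c_1}\circ\mathcal{C}\sembrack{c_2}\}$ differ by a genuine extra application of $c_1$, and no disjointness-of-store assumption makes $c_1$ act as the identity. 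No scheduling or commuting argument removes this, because under the literal clause $\mathcal{C}\sembrack{c_0\parallel c_1}=\{\mathcal{C}\sembrack{c_0}\}\cup\{\mathcal{C}\sembrack{c_1}\}$ the denotation never composes the two branches at all; that clause (which is anyway ill-typed against $\mathcal{C}:\mathbf{Com}\rightarrow(\Sigma\rightharpoonup\Sigma)$) is simply too weak to validate the proposition as a set identity. So the ``commuting collapse'' step in your primary argument is a genuine gap.

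What saves the statement is the reading your fallback already points at, and it should be promoted to the main line. The paper's own proof here is a one-line omission (``by the definition of the denotation of sequencing and $\parallel$, trivial''), so there is nothing substantive to match; but everywhere the paper actually computes with $\parallel$ — the proof of Lemma \ref{lem}, the agreement theorem, the expressiveness argument — it silently uses the functional reading $\mathcal{C}\sembrack{c_0\parallel c_1}\sigma=\sigma'$ iff $\mathcal{C}\sembrack{c_1}(\mathcal{C}\sembrack{c_0}\sigma)=\sigma'=\mathcal{C}\sembrack{c_0}(\mathcal{C}\sembrack{c_1}\sigma)$, not the set-union clause. Under that reading either of your routes closes both identities: redo the unfolding with this clause, where cross-branch commutation (tracked on domains, as you rightly insist) genuinely suffices; or, more cheaply, transport the corresponding operational proposition through the agreement theorem $\mathcal{C}\sembrack{c}=\{(\sigma,\sigma')\mid\langle c,\sigma\rangle\rightarrow\sigma'\}$ of Section \ref{rel} — that theorem's proof does not depend on this proposition, so invoking it out of presentation order is not circular. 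In short: your observation that the result is not ``quite trivial'' under the stated definition is a real catch against the paper, your commuting-collapse step is the flaw, and your transport fallback is the viable proof.
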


\begin{proof}
(1)By the definition of the denotation of sequence and $\parallel$, we can get:

$\mathcal{C}\sembrack{(c_0;c_1)\parallel c_2}=(\mathcal{C}\sembrack{c_1}\circ \mathcal{C}\sembrack{c_0})\cup \mathcal{C}\sembrack{c_2}$

$\mathcal{C}\sembrack{(c_0\parallel c_2);c_1}=\mathcal{C}\sembrack{c_1}\circ(\mathcal{C}\sembrack{c_0}\cup \mathcal{C}\sembrack{c_2})$

So, $\mathcal{C}\sembrack{(c_0;c_1)\parallel c_2}=\mathcal{C}\sembrack{(c_0\parallel c_2);c_1}$, as desired.

(1)By the definition of the denotation of sequence and $\parallel$, we can get:

$\mathcal{C}\sembrack{(c_0;c_1)\parallel (c_2;c_3)}=(\mathcal{C}\sembrack{c_1}\circ \mathcal{C}\sembrack{c_0})\cup (\mathcal{C}\sembrack{c_3}\circ\mathcal{C}\sembrack{c_2}$

$\mathcal{C}\sembrack{(c_0\parallel c_2);(c_1\parallel c_3)}=(\mathcal{C}\sembrack{c_1}\cup \mathcal{C}\sembrack{c_3})\circ (\mathcal{C}\sembrack{c_2}\cup\mathcal{C}\sembrack{c_0}$

So, $\mathcal{C}\sembrack{(c_0;c_1)\parallel (c_2;c_3)}=\mathcal{C}\sembrack{(c_0\parallel c_2);(c_1\parallel c_3)}$, as desired.
\end{proof}

\begin{proposition}\label{skipP2}
$\mathcal{C}\sembrack{c\parallel \mathbf{skip}}=\mathcal{C}\sembrack{c}$, for $c\in\mathbf{Com}$.
\end{proposition}

\begin{proof}
By the definition of the denotation of $\mathbf{skip}$ and $\parallel$, we can get:

$\mathcal{C}\sembrack{c\parallel \mathbf{skip}}=\mathcal{C}\sembrack{c}\cup \mathcal{C}\sembrack{\mathbf{skip}}$

So, $\mathcal{C}\sembrack{c\parallel \mathbf{skip}}=\mathcal{C}\sembrack{c}$, for $c\in\mathbf{Com}$, as desired.
\end{proof}

\begin{lemma}\label{skipP2}
For  $c_0,c_1\in\mathbf{Com}$,

\begin{enumerate}
  \item $c_0\parallel c_1\sim c_0\parallel (\mathbf{skip};c_1)\sim c_0;c_1$;
  \item $c_0\parallel c_1\sim (\mathbf{skip};c_0)\parallel c_1\sim c_1;c_0$.
\end{enumerate}
\end{lemma}

\begin{proof}
It is obvious by Proposition \ref{SePa2} and \ref{skipP2}.
\end{proof}

\subsection{Relations between Operational and Denotational Semantics}\label{pplrs}

The operational and denotational semantics still agree on the evaluation of $\mathbf{Aexp}$ and $\mathbf{Bexp}$, we do not repeat any more, please refer to \cite{FS} for details. We
will prove the agreement of the case $\mathbf{Com}$ as follows.

\begin{lemma}\label{lem}
For all commands $c$ and states $\sigma,\sigma'$,

$$\langle c,\sigma\rangle\rightarrow\sigma' \Rightarrow (\sigma,\sigma')\in\mathcal{C}\sembrack{c}$$
\end{lemma}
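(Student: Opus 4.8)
The plan is to proceed by rule induction on the derivation of the big-step operational judgment $\langle c,\sigma\rangle\rightarrow\sigma'$. Since the evaluation rules are syntax-directed, the induction splits into one case per command former, and in each case the induction hypothesis supplies, for every premise $\langle c_i,\sigma_i\rangle\rightarrow\sigma_i'$ used in the derivation, the membership $(\sigma_i,\sigma_i')\in\mathcal{C}\sembrack{c_i}$. I would assume throughout the standard denotations from \cite{FS} for the non-parallel commands, together with the already-established agreement of operational and denotational evaluation on $\mathbf{Aexp}$ and $\mathbf{Bexp}$.

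First I would dispatch the easy cases. For $\textbf{skip}$ the unique rule forces $\sigma'=\sigma$, and $(\sigma,\sigma)\in\mathcal{C}\sembrack{\textbf{skip}}$ since that denotation is the identity relation. For $X:=a$ the rule fixes $\sigma'$ to be $\sigma$ updated at $X$ by the value of $a$, which is exactly the pair placed in $\mathcal{C}\sembrack{X:=a}$, using $\mathbf{Aexp}$-agreement. For $c_0;c_1$ the two premises give $(\sigma,\sigma'')\in\mathcal{C}\sembrack{c_0}$ and $(\sigma'',\sigma')\in\mathcal{C}\sembrack{c_1}$, so $(\sigma,\sigma')$ lies in the relational composition $\mathcal{C}\sembrack{c_1}\circ\mathcal{C}\sembrack{c_0}=\mathcal{C}\sembrack{c_0;c_1}$. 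For the conditional one selects the correct branch according to the truth value of $b$, invoking $\mathbf{Bexp}$-agreement. The first genuinely delicate case is $\textbf{while } b\textbf{ do } c$: here I would use that its denotation is the least fixed point of the usual loop functional and hence satisfies the one-step unfolding identity; applying the induction hypothesis to the loop body and to the recursive while-premise then places the derived pair into this fixed point, whether $b$ evaluates to true or false.

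The hard part, and the reason this lemma cannot simply be cited from \cite{FS}, is the parallel case $c_0\parallel c_1$. A derivation of $\langle c_0\parallel c_1,\sigma\rangle\rightarrow\sigma'$ arises through one of the two operational rules, corresponding to running $c_0$ before $c_1$ or $c_1$ before $c_0$ (with both orderings required to agree, per the remark following the rules). In either sub-case the induction hypothesis supplies the intermediate memberships for the two sequential premises, and I must then exhibit $(\sigma,\sigma')\in\mathcal{C}\sembrack{c_0\parallel c_1}=\{\mathcal{C}\sembrack{c_0}\}\cup\{\mathcal{C}\sembrack{c_1}\}$. The crux is to interpret this set-valued denotation so that a run through $c_0$-then-$c_1$ (respectively $c_1$-then-$c_0$) is genuinely witnessed on the denotational side and matched against the corresponding operational rule. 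I expect this reconciliation between the two-rule operational account of $\parallel$ and its set-theoretic denotation to be the main obstacle, and the point at which the intended reading of $\mathcal{C}\sembrack{c_0\parallel c_1}$ must be pinned down before the induction closes.
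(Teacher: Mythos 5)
Your overall route coincides with the paper's: rule induction on the big-step derivation, with $P(c,\sigma,\sigma')$ defined as $(\sigma,\sigma')\in\mathcal{C}\sembrack{c}$, the standard cases handled as in \cite{FS}, and $\parallel$ isolated as the only new case. The routine cases you sketch (skip, assignment, sequencing, conditional, while via fixed-point unfolding) are fine. The genuine gap is that you never actually prove the one case the lemma exists for: in the $\parallel$ case you correctly reduce the problem to exhibiting $(\sigma,\sigma')\in\mathcal{C}\sembrack{c_0\parallel c_1}$ and then stop, declaring that the intended reading of the set-valued denotation $\{\mathcal{C}\sembrack{c_0}\}\cup\{\mathcal{C}\sembrack{c_1}\}$ ``must be pinned down.'' Identifying the obstacle is not overcoming it; as written, the induction does not close. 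There is also a smaller misreading: you treat a derivation of $\langle c_0\parallel c_1,\sigma\rangle\rightarrow\sigma'$ as arising through \emph{one} of the two rules and split into sub-cases, whereas the paper (per its remark that the two rules ``must be satisfied simultaneously'') assumes the premises of \emph{both} orderings at once in a single case, which is what licenses extracting all four memberships $\mathcal{C}\sembrack{c_0}\sigma=\sigma''$, $\mathcal{C}\sembrack{c_1}\sigma''=\sigma'$, $\mathcal{C}\sembrack{c_1}\sigma=\sigma'''$, $\mathcal{C}\sembrack{c_0}\sigma'''=\sigma'$ together.

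To be fair, your diagnosis of the crux is accurate, and it points at a defect in the paper itself: as literally defined, $\mathcal{C}\sembrack{c_0\parallel c_1}$ is a two-element set of partial functions, not a relation on states, so the membership $(\sigma,\sigma')\in\mathcal{C}\sembrack{c_0\parallel c_1}$ is ill-typed. The paper's own proof silently resolves this by writing $\mathcal{C}\sembrack{c_0\parallel c_1}\sigma=\sigma'$ directly from $\mathcal{C}\sembrack{c_1}(\mathcal{C}\sembrack{c_0}\sigma)=\sigma'$ and $\mathcal{C}\sembrack{c_0}(\mathcal{C}\sembrack{c_1}\sigma)=\sigma'$, i.e., it implicitly reads the denotation as the common value of the two compositions --- the same reading used later in the expressiveness and weakest-precondition arguments, where $\mathcal{C}\sembrack{c_0\parallel c_1}\sigma\models^I B$ is unfolded as the conjunction of $\mathcal{C}\sembrack{c_1}(\mathcal{C}\sembrack{c_0}\sigma)\models^I B$ and $\mathcal{C}\sembrack{c_0}(\mathcal{C}\sembrack{c_1}\sigma)\models^I B$. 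Had you fixed that reading explicitly (say, $\mathcal{C}\sembrack{c_0\parallel c_1}=\{(\sigma,\sigma') \mid \mathcal{C}\sembrack{c_1}(\mathcal{C}\sembrack{c_0}\sigma)=\sigma'=\mathcal{C}\sembrack{c_0}(\mathcal{C}\sembrack{c_1}\sigma)\}$), your induction would close exactly as the paper's does, and arguably more honestly; without it, neither your argument nor the paper's constitutes a proof from the stated definition.
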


\begin{proof}
We will use rule-induction on the operational semantics of commands. For $c\in\mathbf{Com}$ and $\sigma,\sigma'\in\Sigma$, define

$$P(c,\sigma,\sigma')\Leftrightarrow_{def}(\sigma,\sigma')\in\mathcal{C}\sembrack{c}$$

We will show $P$ is closed under the rules for the execution of commands, and we will only prove the new case of $\parallel$, other commands please refer to \cite{FS} for details.

Recall the transition rules of $\parallel$ are:

%

$$\frac{\langle c_1,\sigma\rangle\rightarrow\sigma'\quad \langle c_0,\sigma\rangle\rightarrow\sigma''}{\langle c_0\parallel c_1,\sigma\rangle\rightarrow\sigma'\uplus\sigma''}$$

Assume that

$$\langle c_0,\sigma\rangle \rightarrow \sigma' \& P(c_0,\sigma,\sigma')\&\langle c_1, \sigma\rangle \rightarrow \sigma'' \&P(c_1,\sigma,\sigma'')$$

From the meaning of $P$, we can get that

$$\mathcal{C}\sembrack{c_0}\sigma = \sigma'\textrm{ and }\mathcal{C}\sembrack{c_1}\sigma = \sigma''$$

We can get

$$\mathcal{C}\sembrack{c_0\parallel c_1}\sigma=\sigma'\uplus\sigma''$$

which means that $P(c_0\parallel c_1,\sigma,\sigma'\uplus\sigma'')$ holds for the consequence of the rule, and is closed under this rule.
\end{proof}

\begin{theorem}
For all commands $c$ and states $\sigma,\sigma'$,

$$\mathcal{C}\sembrack{c}=\{(\sigma,\sigma')|\langle c,\sigma\rangle\rightarrow \sigma'\}$$
\end{theorem}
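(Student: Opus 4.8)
\section*{Proof proposal}

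The plan is to read the theorem as the conjunction of two set inclusions and dispatch them separately. Identifying each partial function $\mathcal{C}\sembrack{c}$ with its graph, and writing $\mathcal{O}\sembrack{c}=\{(\sigma,\sigma')\mid\langle c,\sigma\rangle\rightarrow\sigma'\}$ for the operational relation, the claim $\mathcal{C}\sembrack{c}=\mathcal{O}\sembrack{c}$ amounts to $\mathcal{O}\sembrack{c}\subseteq\mathcal{C}\sembrack{c}$ together with $\mathcal{C}\sembrack{c}\subseteq\mathcal{O}\sembrack{c}$. The first inclusion is exactly Lemma~\ref{lem}, already established by rule-induction on the operational semantics, so nothing remains there. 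Everything therefore reduces to the reverse inclusion (adequacy): for all $c$ and all $\sigma,\sigma'$, $(\sigma,\sigma')\in\mathcal{C}\sembrack{c}\Rightarrow\langle c,\sigma\rangle\rightarrow\sigma'$.

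First I would prove adequacy by structural induction on the command $c$. The atomic cases $\textbf{skip}$ and $X:=a$ follow by unfolding the denotation and firing the single matching operational rule, the assignment case invoking the already-established agreement for $\mathbf{Aexp}$. For $c_0;c_1$ the denotation is the relational composition $\mathcal{C}\sembrack{c_1}\circ\mathcal{C}\sembrack{c_0}$, so any pair in it factors through an intermediate state $\sigma''$; the two induction hypotheses then yield $\langle c_0,\sigma\rangle\rightarrow\sigma''$ and $\langle c_1,\sigma''\rangle\rightarrow\sigma'$, and the sequencing rule closes the case. The conditional $\textbf{if } b\textbf{ then } c_0\textbf{ else } c_1$ splits on the boolean value $\mathcal{B}\sembrack{b}\sigma$ and in each branch reduces to one induction hypothesis.

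The new parallel case is handled by reading off the intended meaning of $\mathcal{C}\sembrack{c_0\parallel c_1}$, namely that $(\sigma,\sigma')$ lies in it precisely when both $\mathcal{C}\sembrack{c_1}(\mathcal{C}\sembrack{c_0}\sigma)=\sigma'$ and $\mathcal{C}\sembrack{c_0}(\mathcal{C}\sembrack{c_1}\sigma)=\sigma'$, which is the condition that the two orderings be "satisfied simultaneously" noted after the operational rules. Introducing the witnesses $\sigma''=\mathcal{C}\sembrack{c_0}\sigma$ and $\sigma'''=\mathcal{C}\sembrack{c_1}\sigma$, the induction hypotheses on $c_0$ and $c_1$ supply the four derivations $\langle c_0,\sigma\rangle\rightarrow\sigma''$, $\langle c_1,\sigma''\rangle\rightarrow\sigma'$, $\langle c_1,\sigma\rangle\rightarrow\sigma'''$, and $\langle c_0,\sigma'''\rangle\rightarrow\sigma'$, so either parallel transition rule yields $\langle c_0\parallel c_1,\sigma\rangle\rightarrow\sigma'$.

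The hard part will be the $\textbf{while}$ case, which I expect to be the main obstacle, since the denotation $\mathcal{C}\sembrack{\textbf{while } b\textbf{ do }c}$ is not built compositionally from strictly smaller commands but as the least fixed point of the functional $\Gamma(\varphi)=\{(\sigma,\sigma)\mid\mathcal{B}\sembrack{b}\sigma=\mathbf{false}\}\cup\{(\sigma,\sigma')\mid\mathcal{B}\sembrack{b}\sigma=\mathbf{true}\ \&\ \varphi(\mathcal{C}\sembrack{c}\sigma)=\sigma'\}$ on the cpo of partial state functions. I would write this fixed point as $\bigcup_{n}\Gamma^{n}(\emptyset)$ and prove, by an inner induction on $n$, that $(\sigma,\sigma')\in\Gamma^{n}(\emptyset)\Rightarrow\langle\textbf{while } b\textbf{ do }c,\sigma\rangle\rightarrow\sigma'$: the base case $n=0$ is vacuous, and the step case uses the outer induction hypothesis on the loop body $c$ together with the two operational rules for $\textbf{while}$ (immediate exit when $b$ is false, one unfolding into $c;\textbf{while } b\textbf{ do }c$ when $b$ is true). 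Assembling the atomic, sequencing, conditional, parallel, and $\textbf{while}$ cases completes the adequacy inclusion, which combined with Lemma~\ref{lem} gives the stated equality.
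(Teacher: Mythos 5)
Your proposal is correct and takes essentially the same route as the paper: Lemma \ref{lem} supplies the inclusion from the operational relation into $\mathcal{C}\sembrack{c}$, and the converse is proved by structural induction on $c$, with the parallel case handled exactly as in the paper's proof --- reading the denotation of $c_0\parallel c_1$ as requiring both composition orders to agree, extracting the witnesses $\sigma''$ and $\sigma'''$, and feeding the four induction-hypothesis derivations into the transition rules for $\parallel$. The only difference is presentational: you spell out the standard cases (in particular the fixed-point approximation argument $\bigcup_n\Gamma^n(\emptyset)$ for \textbf{while}), which the paper simply delegates to \cite{FS}, so this is a faithful inlining of Winskel's treatment rather than a different argument.
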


\begin{proof}
Lemma \ref{lem} gives the $\Leftarrow$ direction of proof, we only need to prove

$$(\sigma,\sigma')\in\mathcal{C}\sembrack{c}\Rightarrow \langle c,\sigma\rangle\rightarrow \sigma'$$

It is sufficient to induct on the structure of command $c$, we only prove the new case of $c\equiv c_0\parallel c_1$, other cases please refer to \cite{FS} for details.

Suppose $(\sigma,\sigma'\uplus\sigma'')\in\mathcal{C}\sembrack{c}$. Then there are some states, such that $(\sigma,\sigma')\in\mathcal{C}\sembrack{c_0}$, $(\sigma,\sigma'')\in\mathcal{C}\sembrack{c_1}$. By the hypothesis of $c_0,c_1$, we get

$$\langle c_0,\sigma\rangle\rightarrow \sigma'\textrm{ and }\langle c_1,\sigma\rangle\rightarrow \sigma''$$

So, $\langle c_0\parallel c_1,\sigma\rangle\rightarrow \sigma'\uplus\sigma''$, as desired.
\end{proof}

\subsection{Axiomatic Semantics}\label{pplas}

In this section, we give an axiomatic semantics for PPL by extending the Hoare rules with parallelism.

\subsubsection{Extended Hoare Rules for Parallelism}{\label{ehr}}

PPL should be extended to support assertion.

For $\mathbf{Aexp}$, it should be extended to:

$$a::=n\quad |\quad X\quad|\quad i\quad|\quad a_0+a_1\quad|\quad a_0-a_1\quad|\quad a_0\times a_1$$

where $i$ ranges over integer variables, $\mathbf{Intvar}$.

For $\mathbf{Bexp}$, it should be extended to support boolean assertion:

$$A::=\mathbf{true} | \mathbf{false}| a_0=a_1| a_0\leq a_1| \neg A| A_0\wedge A_1| A_0\vee A_1| A_0\Rightarrow A_1| \forall i.A| \exists i.A$$

And the formation rule of $\mathbf{Com}$ is maintained:

$$c::=\mathbf{skip}\quad |\quad X:=a\quad|\quad c_0;c_1\quad|\quad
\textbf{if } b\textbf{ then } c_0\textbf{ else } c_1\quad|\quad \textbf{while } b\textbf{ do } c\quad|\quad c_0\parallel c_1$$

Note that, $\mathbf{Com}$ contains a parallel composition $\parallel$.

The denotational semantics should also contain an interpretation $I$.

The full extended Hoare rules are as follow.

Rule for $\mathbf{skip}$:

$$\{A\}\mathbf{skip}\{A\}$$

Rule for assignments:

$$\{B[a/X]\}X:=a\{B\}$$

Rule for sequencing:

$$\frac{\{A\}c_0\{C\}\quad\{C\}c_1\{B\}}{\{A\}c_0;c_1\{B\}}$$

Rule for conditionals:

$$\frac{\{A\wedge b\}c_0\{B\}\quad\{A\wedge\neg b\}c_1\{B\}}{\{A\}\textbf{if }b\textbf{ then }c_0\textbf{ else }c_1\{B\}}$$

Rule for while loops:

$$\frac{\{A\wedge b\}c\{A\}}{\{A\}\textbf{while }b\textbf{ do }c\{A\wedge\neg b\}}$$

Rule for consequence:

$$\frac{\models (A\Rightarrow A')\quad \{A'\}c\{B'\}\quad\models (B'\Rightarrow B)}{\{A\}c\{B\}}$$

Rule for parallelism:

$$\frac{\{A\}c_0\{C\}\quad\{C\}c_1\{B\}\quad \{A\}c_1\{D\}\quad \{D\}c_0\{B\}}{\{A\}c_0\parallel c_1\{B\}}$$

\subsubsection{Soundness of The Extended Hoare Rules}\label{sou}

We can prove that each rule is sound by the following soundness theorem.

\begin{theorem}
Let $\{A\}c\{B\}$ be a partial correctness assertion, if $\vdash \{A\}c\{B\}$, then $\models \{A\}c\{B\}$.
\end{theorem}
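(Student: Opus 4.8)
The plan is to prove soundness by structural induction on the command $c$, showing that for each formation rule of $\mathbf{Com}$ the corresponding Hoare rule preserves validity. Recall that $\models\{A\}c\{B\}$ means that for every interpretation $I$ and every state $\sigma$, if $\sigma$ satisfies $A$ (under $I$) and $\langle c,\sigma\rangle\rightarrow\sigma'$, then $\sigma'$ satisfies $B$. Equivalently, using Theorem stated above that identifies $\mathcal{C}\sembrack{c}$ with the relation $\{(\sigma,\sigma')\mid\langle c,\sigma\rangle\rightarrow\sigma'\}$, validity can be phrased denotationally: for all $(\sigma,\sigma')\in\mathcal{C}\sembrack{c}$, $\sigma\models^{I}A$ implies $\sigma'\models^{I}B$. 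I would adopt this denotational formulation because it makes the parallel case tractable and lets me reuse the denotation $\mathcal{C}\sembrack{c_0\parallel c_1}$ directly.

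First I would dispatch the base cases and the standard inductive cases. For $\mathbf{skip}$, assignment, sequencing, conditionals, the while loop, and the consequence rule, the arguments are exactly those of classical Hoare logic for the while-language, so I would state that they are handled as in \cite{FS} and give at most a one-line reminder for each (for instance, the assignment rule is sound because the substitution lemma gives $\sigma\models^{I}B[a/X]$ iff the updated state $\sigma[X\mapsto\mathcal{A}\sembrack{a}\sigma]\models^{I}B$). The while case uses the fact that the validity of the premise $\{A\wedge b\}c\{A\}$ makes $A$ a genuine invariant, combined with the characterization of the loop's denotation as a least fixed point; the induction hypothesis on $c$ supplies soundness of the loop body.

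The genuinely new case, and the one I expect to be the main obstacle, is the parallel rule. Here I would assume $\vdash\{A\}c_0\parallel c_1\{B\}$ was derived from $\vdash\{A\}c_0\{C\}$, $\vdash\{C\}c_1\{B\}$, $\vdash\{A\}c_1\{D\}$, and $\vdash\{D\}c_0\{B\}$, and invoke the induction hypothesis to obtain $\models\{A\}c_0\{C\}$, $\models\{C\}c_1\{B\}$, $\models\{A\}c_1\{D\}$, $\models\{D\}c_0\{B\}$. Now take any $(\sigma,\sigma')\in\mathcal{C}\sembrack{c_0\parallel c_1}=\{\mathcal{C}\sembrack{c_0}\}\cup\{\mathcal{C}\sembrack{c_1}\}$ with $\sigma\models^{I}A$. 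The delicate point is unwinding what membership in this denotation means: by the operational rules for $\parallel$ and the Theorem, $(\sigma,\sigma')$ arises from one of the two serial orderings, that is, either there is $\sigma''$ with $(\sigma,\sigma'')\in\mathcal{C}\sembrack{c_0}$ and $(\sigma'',\sigma')\in\mathcal{C}\sembrack{c_1}$, or there is $\sigma'''$ with $(\sigma,\sigma''')\in\mathcal{C}\sembrack{c_1}$ and $(\sigma''',\sigma')\in\mathcal{C}\sembrack{c_0}$. In the first ordering, $\models\{A\}c_0\{C\}$ gives $\sigma''\models^{I}C$, and then $\models\{C\}c_1\{B\}$ gives $\sigma'\models^{I}B$; in the second ordering, $\models\{A\}c_1\{D\}$ gives $\sigma'''\models^{I}D$, and then $\models\{D\}c_0\{B\}$ gives $\sigma'\models^{I}B$. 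Either way $\sigma'\models^{I}B$, which is exactly $\models\{A\}c_0\parallel c_1\{B\}$.

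The subtlety I would flag explicitly is that the four-premise shape of the parallel rule is precisely what is needed to cover both interleavings that the semantics of $\parallel$ permits: a chain of hypotheses for the order $c_0$-then-$c_1$ and a second chain for $c_1$-then-$c_0$, meeting at the common precondition $A$ and postcondition $B$. I would also note the standing assumptions from Section \ref{os2} — that IPPL programs are free of conflicts and of non-deterministically resolved race conditions — since these guarantee that the parallel composition's behavior is captured by exactly these two serial executions and that no further interleaved traces need to be certified; without them the denotation $\mathcal{C}\sembrack{c_0\parallel c_1}$ would not reduce to the two sequential relations on which the rule's premises are built.
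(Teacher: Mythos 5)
Your handling of the $\parallel$ case is essentially the paper's own proof: the paper likewise shows each Hoare rule preserves validity (``induct on the rule'') and treats only the new parallel rule, chaining $\{A\}c_0\{C\}$, $\{C\}c_1\{B\}$ through the $c_0$-then-$c_1$ ordering and $\{A\}c_1\{D\}$, $\{D\}c_0\{B\}$ through the reverse one, exactly as you do; your relational unwinding of $(\sigma,\sigma')\in\mathcal{C}\sembrack{c_0\parallel c_1}$ is if anything more careful than the paper's, which writes $\mathcal{C}\sembrack{c_0}\sigma\models^I C$ etc.\ directly (glossing over partiality) and reads the two serial orderings conjunctively --- a reading your either-or argument subsumes. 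One small correction to the framing: announce the induction as rule induction on the derivation of $\vdash\{A\}c\{B\}$ (as the paper and \cite{FS} do), not structural induction on the command $c$, since the consequence rule leaves $c$ unchanged and so a structural induction hypothesis on commands cannot discharge it --- your parallel case as written is already exactly the corresponding rule-induction step, so nothing else needs to change.
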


\begin{proof}
It is sufficient to induct on the rule to prove each rule is valid. We only prove the new case of $\parallel$ rule, other cases please refer to \cite{FS} for details.

Assume that $\models\{A\}c_0\{C\}$ and $\models \{C\}c_1\{B\}$, and $\models\{A\}c_1\{D\}$ and $\models \{D\}c_0\{B\}$. Let $I$ be an interpretation. Suppose $\sigma\models^I A$.
Then $\mathcal{C}\sembrack{c_0}\sigma\models^I C$ and $\mathcal{C}\sembrack{c_1}(\mathcal{C}\sembrack{c_0}\sigma)\models^I B$, and $\mathcal{C}\sembrack{c_1}\sigma\models^I D$ and
$\mathcal{C}\sembrack{c_0}(\mathcal{C}\sembrack{c_1}\sigma)\models^I B$. Hence, $\models \{A\}c_0\parallel c_1\{B\}$, as desired.
\end{proof}

\subsubsection{Completeness of The Extended Hoare Rules}\label{com}

G\"{o}del's Incompleteness Theorem implies that the extended Hoare rules are incomplete. We prove the relative completeness in the sense of Cook.

\begin{theorem}
PPL extended with assertion is expressive.
\end{theorem}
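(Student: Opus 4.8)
The plan is to prove expressiveness in the sense of Cook: for every command $c\in\mathbf{Com}$ and every assertion $B$, I will exhibit an assertion $w\sembrack{c,B}$ which, in every interpretation $I$, denotes exactly the weakest liberal precondition of $c$ with respect to $B$, namely the set $\{\sigma\in\Sigma\mid \mathcal{C}\sembrack{c}\sigma\textrm{ defined}\Rightarrow \mathcal{C}\sembrack{c}\sigma\models^I B\}$. Once such an assertion is produced for every pair $(c,B)$, expressiveness holds by definition, and this is exactly the ingredient needed to upgrade the soundness theorem of Section \ref{sou} to relative completeness. I would carry out the construction by structural induction on $c$, reusing the standard treatment of the traditional fragment from \cite{FS} and supplying the single new clause for the parallel operator.

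The base cases and the easy inductive cases are handled by the closure properties of the assertion language. For $\textbf{skip}$ I take $w\sembrack{\textbf{skip},B}=B$; for an assignment I take $w\sembrack{X:=a,B}=B[a/X]$; for sequencing I set $w\sembrack{c_0;c_1,B}=w\sembrack{c_0,w\sembrack{c_1,B}}$ and invoke the two induction hypotheses; and for the conditional I take $(b\wedge w\sembrack{c_0,B})\vee(\neg b\wedge w\sembrack{c_1,B})$. In each case the required assertion exists because $\mathbf{Bexp}$ embeds into the assertion language and the language is closed under $\neg,\wedge,\vee$ and under substitution of arithmetic expressions.

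The crux, and the step I expect to be the main obstacle, is the $\textbf{while }b\textbf{ do }c$ case. Here the precondition must assert the existence of a finite computation sequence $\sigma_0,\sigma_1,\dots,\sigma_n$ of unbounded length with $\sigma_0=\sigma$, successive states linked by one execution of the body under $b$, with $b$ false at $\sigma_n$ and $\sigma_n\models B$. Since IPPL's assertion language is first-order arithmetic over the integers, each state (a finite tuple of location values) and each such sequence can be coded by integers and recovered by a first-order formula using G\"{o}del's $\beta$-function, so that an existential quantifier over coding numbers replaces the unavailable quantifier over sequences. This is the classical Cook construction; I would invoke the arithmetization of \cite{FS} rather than redo it, checking only that $\mathbf{Aexp}$ and the assertion syntax supply exactly the arithmetic ($+$, $\times$, and first-order quantification) that the coding requires, which they do.

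For the new case $c\equiv c_0\parallel c_1$ I would exploit the denotational semantics of Section \ref{ds}, under which $\mathcal{C}\sembrack{c_0\parallel c_1}\sigma$ is defined precisely when running $c_0$ then $c_1$ and running $c_1$ then $c_0$ are both defined and agree. The natural candidate witness is
$$w\sembrack{c_0\parallel c_1,B}=w\sembrack{c_0,w\sembrack{c_1,B}}\wedge w\sembrack{c_1,w\sembrack{c_0,B}},$$
which is an assertion by the induction hypotheses and which, taking $C=w\sembrack{c_1,B}$ and $D=w\sembrack{c_0,B}$, feeds directly into the parallelism rule of Section \ref{ehr}. The genuinely new point to verify is that this conjunction denotes the weakest liberal precondition for the specific $\parallel$ semantics: the left conjunct controls the $c_0;c_1$ ordering and the right conjunct the $c_1;c_0$ ordering, and because definedness of $\mathcal{C}\sembrack{c_0\parallel c_1}$ forces the two orderings to coincide, satisfaction of the conjunction is what guarantees a $B$-state whenever the parallel command terminates. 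The subtlety to watch is the interaction between the agreement condition and \emph{liberal} definedness, i.e.\ whether one ordering can terminate while the other diverges; I would discharge this by appealing to the hypothesis (Section \ref{os2}) that IPPL programs are free of the conflicts and race conditions that would separate the two orderings, so that the two sequentializations have the same domain. Completing this clause closes the induction and establishes that the assertion language of IPPL is expressive.
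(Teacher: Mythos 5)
Your proof follows essentially the same route as the paper: structural induction on $c$ deferring the traditional cases to \cite{FS}, with the new parallel clause witnessed by the nested weakest preconditions of the two sequential orderings, and your explicit conjunction $w\sembrack{c_0,w\sembrack{c_1,B}}\wedge w\sembrack{c_1,w\sembrack{c_0,B}}$ is just the honest rendering of the paper's doubly-stated definition of $w\sembrack{c_0\parallel c_1,B}$, closed by the same chain of equivalences. Your extra care about one ordering terminating while the other diverges, discharged via the no-conflict/no-race assumption of Section \ref{os2}, addresses a point the paper's iff chain silently assumes, so your version is if anything slightly more rigorous on the same approach.
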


\begin{proof}
It is sufficient to induct on the structure of command $c$, such that for all assertions $B$ there is an assertion $w\sembrack{c,B}$, for all interpretations $I$

$$wp^I\sembrack{c,B}=w\sembrack{c,B}^I$$

We only prove the new case of parallelism $c\equiv c_0\parallel c_1$, other cases please refer to \cite{FS} for details.

Inductively define $w\sembrack{c_0\parallel c_1,B}\equiv w\sembrack{c_0,w\sembrack{c_1, B}}$ and $w\sembrack{c_0\parallel c_1,B}\equiv w\sembrack{c_1,w\sembrack{c_0, B}}$. Then, for
$\sigma\in\Sigma$ and any interpretation $I$,

$\sigma\in wp^I\sembrack{c_0\parallel c_1,B}$ iff $\mathcal{C}\sembrack{c_0\parallel c_1}\sigma\models^I B$

iff $\mathcal{C}\sembrack{c_1}(\mathcal{C}\sembrack{c_0}\sigma)\models^I B$ and $\mathcal{C}\sembrack{c_0}(\mathcal{C}\sembrack{c_1}\sigma)\models^I B$

iff $\mathcal{C}\sembrack{c_0}\sigma\models^I w\sembrack{c_1,B}$ and $\mathcal{C}\sembrack{c_1}\sigma\models^I w\sembrack{c_0,B}$

iff $\sigma\models^I w\sembrack{c_0,w\sembrack{c_1,B}}$ and $\sigma\models^I w\sembrack{c_1,w\sembrack{c_0,B}}$

iff $\sigma\models^I w\sembrack{c_0\parallel c_1,B}$.

\end{proof}

\begin{lemma}
For $c\in\mathbf{Com}$ and $B$ is an assertion, let $w\sembrack{c,B}$ be an assertion expressing the weakest precondition with $w\sembrack{c,B}^I=wp^I\sembrack{c,B}$. Then

$$\vdash\{w\sembrack{c,B}\}c\{B\}$$
\end{lemma}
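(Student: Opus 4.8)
The plan is to proceed by structural induction on the command $c$, exactly as in the relative-completeness argument of \cite{FS}. For the base cases ($\mathbf{skip}$ and assignment) and the standard inductive cases (sequencing, conditional, and while), the derivations are identical to the classical Cook-completeness proof: one uses the syntactic characterization of the weakest precondition together with the rule of consequence, and for the while loop one exploits the fact that $w\sembrack{\textbf{while }b\textbf{ do }c_0,B}$ is a loop invariant. I would cite \cite{FS} for these and concentrate entirely on the new case $c\equiv c_0\parallel c_1$, for which the inductive hypothesis supplies $\vdash\{w\sembrack{c_i,A}\}c_i\{A\}$ for either subcommand $c_i$ and any postcondition $A$.

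For the parallel case, first I would record the two syntactic forms of the weakest precondition supplied by the preceding expressiveness theorem, namely $w\sembrack{c_0\parallel c_1,B}\equiv w\sembrack{c_0,w\sembrack{c_1,B}}$ and $w\sembrack{c_0\parallel c_1,B}\equiv w\sembrack{c_1,w\sembrack{c_0,B}}$, which were shown there to have identical meaning $wp^I\sembrack{c_0\parallel c_1,B}$ under every interpretation $I$. I would then fix the intermediate assertions $C\equiv w\sembrack{c_1,B}$ and $D\equiv w\sembrack{c_0,B}$. Applying the inductive hypothesis four times yields
$$\vdash\{w\sembrack{c_0,C}\}c_0\{C\},\quad \vdash\{C\}c_1\{B\},\quad \vdash\{w\sembrack{c_1,D}\}c_1\{D\},\quad \vdash\{D\}c_0\{B\},$$
where the first precondition $w\sembrack{c_0,C}$ is the first syntactic form of $w\sembrack{c_0\parallel c_1,B}$ and the third precondition $w\sembrack{c_1,D}$ is its second form.

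Finally, writing $A\equiv w\sembrack{c_0\parallel c_1,B}$ in its first form, I would align all four premises of the parallelism rule on this single precondition $A$. The triples $\{A\}c_0\{C\}$, $\{C\}c_1\{B\}$, and $\{D\}c_0\{B\}$ are then immediate. The remaining premise $\{A\}c_1\{D\}$ is where I expect the only real obstacle: the induction delivers $\{w\sembrack{c_1,D}\}c_1\{D\}$, whose precondition is the \emph{second} syntactic form of $A$ rather than $A$ itself. Here I would invoke the agreement of the two forms established in the expressiveness theorem: since $A$ and $w\sembrack{c_1,D}$ denote the same weakest precondition, we have $\models(A\Rightarrow w\sembrack{c_1,D})$, so the rule of consequence upgrades the triple to $\{A\}c_1\{D\}$. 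With all four premises now sharing the precondition $A$, a single application of the parallelism rule gives $\vdash\{A\}c_0\parallel c_1\{B\}$, that is $\vdash\{w\sembrack{c_0\parallel c_1,B}\}c_0\parallel c_1\{B\}$, completing the induction. The delicate point is thus not the Hoare-rule bookkeeping but the justification — borrowed from the expressiveness theorem — that the two inductively produced preconditions are semantically equal and hence reconcilable through consequence.
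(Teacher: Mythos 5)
Your proof is correct and takes essentially the same route as the paper: structural induction, using the two syntactic forms $w\sembrack{c_0,w\sembrack{c_1,B}}$ and $w\sembrack{c_1,w\sembrack{c_0,B}}$ whose semantic agreement comes from the expressiveness theorem, reconciled via the consequence rule around an application of the parallelism rule. You are in fact more explicit than the paper, which merely asserts that the two triples $\{w\sembrack{c_0,w\sembrack{c_1,B}}\}c_0\parallel c_1\{B\}$ and $\{w\sembrack{c_1,w\sembrack{c_0,B}}\}c_0\parallel c_1\{B\}$ are derivable before invoking consequence; your isolation of $\{A\}c_1\{D\}$ as the one premise of the parallelism rule needing the consequence-rule upgrade fills in exactly the step the paper leaves implicit.
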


\begin{proof}
It suffices to induct on the structure of commands $c$, we only prove the new case of parallelism $c\equiv c_0\parallel c_1$, other cases please refer to \cite{FS} for details.

For $\sigma\in\Sigma$ and any interpretation $I$,

$\sigma\models^I w\sembrack{c_0\parallel c_1,B}$ iff $\mathcal{C}\sembrack{c_0\parallel c_1}\sigma\models^I B$

iff $\mathcal{C}\sembrack{c_1}(\mathcal{C}\sembrack{c_0}\sigma)\models^I B$ and $\mathcal{C}\sembrack{c_0}(\mathcal{C}\sembrack{c_1}\sigma)\models^I B$

iff $\mathcal{C}\sembrack{c_0}\sigma\models^I w\sembrack{c_1,B}$ and $\mathcal{C}\sembrack{c_1}\sigma\models^I w\sembrack{c_0,B}$

iff $\sigma\models^I w\sembrack{c_0,w\sembrack{c_1,B}}$ and $\sigma\models^I w\sembrack{c_1,w\sembrack{c_0,B}}$.

We get $\vdash\{w\sembrack{c_0,w\sembrack{c_1,B}}\}c_0\parallel c_1\{B\}$ and $\vdash\{w\sembrack{c_1,w\sembrack{c_0,B}}\}c_0\parallel c_1\{B\}$.

Hence, by the consequence rule, we obtain

$$\vdash\{w\sembrack{c_0\parallel c_1,B}\}c_0\parallel c_1\{B\}$$
\end{proof}

\begin{theorem}
For any partial correctness assertion $\{A\}c\{B\}$, if $\models\{A\}c\{B\}$, then $\vdash\{A\}c\{B\}$.
\end{theorem}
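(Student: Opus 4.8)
The plan is to derive this relative-completeness theorem as an immediate consequence of the two results just established — the expressiveness theorem and the preceding weakest-precondition lemma — together with the rule of consequence. In particular, no new induction on the structure of $c$ is required at this stage, since the parallel operator has already been handled inside those two results; the argument here is uniform across all command forms.

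First I would fix a command $c$ and assertions $A,B$ with $\models\{A\}c\{B\}$. By the expressiveness theorem there is an assertion $w\sembrack{c,B}$ with $w\sembrack{c,B}^I = wp^I\sembrack{c,B}$ for every interpretation $I$; for the parallel case this weakest precondition is the one computed there from $w\sembrack{c_0,w\sembrack{c_1,B}}$ and $w\sembrack{c_1,w\sembrack{c_0,B}}$, reflecting that both execution orders of $c_0\parallel c_1$ must establish $B$.

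Next I would show that the validity hypothesis forces $\models(A\Rightarrow w\sembrack{c,B})$. Unwinding $\models\{A\}c\{B\}$, for every $I$ and every $\sigma$ with $\sigma\models^I A$ we have $\mathcal{C}\sembrack{c}\sigma\models^I B$ whenever $\mathcal{C}\sembrack{c}\sigma$ is defined; by the definition of the weakest precondition this is precisely the statement $\sigma\in wp^I\sembrack{c,B} = w\sembrack{c,B}^I$. Hence $A^I\subseteq w\sembrack{c,B}^I$ for every $I$, i.e. $\models(A\Rightarrow w\sembrack{c,B})$. Then I would invoke the preceding lemma to obtain $\vdash\{w\sembrack{c,B}\}c\{B\}$, and combine it with $\models(A\Rightarrow w\sembrack{c,B})$ and the trivial $\models(B\Rightarrow B)$ through the rule of consequence to conclude $\vdash\{A\}c\{B\}$.

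The main obstacle is not located in this final step, which is essentially a corollary, but in the machinery it relies upon: the expressiveness theorem, whose proof needs a G\"{o}del-style arithmetization of states and computations so that $wp^I\sembrack{c,B}$ is actually definable by an assertion in the extended $\mathbf{Bexp}$, and the weakest-precondition derivability lemma. Once those are secured, the completeness theorem for IPPL follows without any additional work specific to $\parallel$.
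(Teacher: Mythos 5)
Your proposal is correct and follows essentially the same route as the paper's own proof: both obtain $\vdash\{w\sembrack{c,B}\}c\{B\}$ from the preceding weakest-precondition lemma, extract $\models(A\Rightarrow w\sembrack{c,B})$ from the validity hypothesis, and conclude $\vdash\{A\}c\{B\}$ by the rule of consequence, with no new induction on $c$. Your write-up merely makes explicit the semantic unwinding of $\models\{A\}c\{B\}$ and the trivial premise $\models(B\Rightarrow B)$ that the paper leaves implicit.
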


\begin{proof}
Suppose $\models\{A\}c\{B\}$, then $\vdash \{w\sembrack{c,B}\}c\{B\}$ where $w\sembrack{c,B}^I=wp^I\sembrack{c,B}$ for any interpretation $I$ (by the above Lemma). Hence,
$\models (A\Rightarrow w\sembrack{c,B})$, we obtain $\vdash\{A\}c\{B\}$.
\end{proof}

\subsection{Non-determinism}\label{gc}

The guarded commands can make the use of non-determinism more rigorous. To provide each command with a conditional guard, it is useful to eliminate the uncontrolled non-determinism.

The syntax of guarded commands are also composed of $\mathbf{Aexp}$, $\mathbf{Bexp}$ and $\mathbf{Com}$, and the syntax of $\mathbf{Aexp}$ and $\mathbf{Bexp}$ are the same as those of PPL in section \ref{ppls}. And the formation rules for the command $c$ and guarded commands $gc$ are as follow.

$$c::=\mathbf{skip}\quad |\quad \mathbf{abort}\quad|\quad X:=a\quad|\quad c_0;c_1\quad|\quad
\textbf{if } gc\textbf{ fi}\quad|\quad \textbf{do } gc\textbf{ od}$$

$$gc::=b\rightarrow c\quad|\quad gc_0\fatbar gc_1$$

where $gc_0\fatbar gc_1$ is the alternative construct of $gc_0$ and $gc_1$.

The operational rules of commands:

$$\langle \mathbf{skip},\sigma\rangle\rightarrow\sigma$$

$$\frac{\langle a, \sigma\rangle\rightarrow n}{\langle X:=a,\sigma\rangle\rightarrow\sigma[n/X]}$$

$$\frac{\langle c_0,\sigma\rangle\rightarrow\sigma'}{\langle c_0;c_1,\sigma\rangle\rightarrow\langle c_1,\sigma'\rangle}\quad \frac{\langle c_0,\sigma\rangle\rightarrow\langle c_0',\sigma'\rangle}{\langle c_0;c_1,\sigma\rangle\rightarrow\langle c_0';c_1,\sigma'\rangle}$$

$$\frac{\langle gc,\sigma\rangle\rightarrow\langle c,\sigma'\rangle}{\langle\textrm{if }gc\textrm{ fi},\sigma\rangle\rightarrow\langle c,\sigma'\rangle}$$

$$\frac{\langle gc,\sigma\rangle\rightarrow\mathbf{fail}}{\langle\textrm{do }gc\textrm{ od},\sigma\rangle\rightarrow\sigma}\quad\frac{\langle gc,\sigma\rangle\rightarrow\langle c,\sigma'\rangle}{\langle\textrm{do }gc\textrm{ od},\sigma\rangle\rightarrow\langle c;\textrm{do }gc\textrm{ od},\sigma'\rangle}$$

The operational rules of guarded commands:

$$\frac{\langle b,\sigma\rangle\rightarrow\mathbf{true}}{\langle b\rightarrow c,\sigma\rangle\rightarrow\langle c, \sigma\rangle}$$

$$\frac{\langle gc_0,\sigma\rangle\rightarrow\langle c,\sigma'\rangle}{\langle gc_0\fatbar gc_1,\sigma\rangle\rightarrow\langle c,\sigma'\rangle}\quad \frac{\langle gc_1,\sigma\rangle\rightarrow\langle c,\sigma'\rangle}{\langle gc_0\fatbar gc_1,\sigma\rangle\rightarrow\langle c,\sigma'\rangle}$$

$$\frac{\langle b,\sigma\rangle\rightarrow\mathbf{false}}{\langle b\rightarrow c,\sigma\rangle\rightarrow\mathbf{fail}}\quad\frac{\langle gc_0,\sigma\rangle\rightarrow\mathbf{false}\quad \langle gc_1,\sigma\rangle\rightarrow\mathbf{false}}{\langle gc_0\fatbar gc_1,\sigma\rangle\rightarrow\mathbf{fail}}$$

\subsection{Communications}\label{comm}

In this section, we extend communicating processes with the support for true concurrency.

The syntax of PPL are also composed of $\mathbf{Aexp}$, $\mathbf{Bexp}$, the names of communication channels $\alpha,\beta,\gamma\in \mathbf{Chan}$, and $\mathbf{Com}$, and the syntax of $\mathbf{Aexp}$ and $\mathbf{Bexp}$ are the same as those of PPL in section \ref{ppls}. And the formation rules for the command $c$ and guarded commands $gc$ are as follow.

$$c::=\mathbf{skip} | \mathbf{abort}| X:=a\quad| \alpha?X| \alpha!a| c_0;c_1|
\textbf{if } gc\textbf{ fi}| \textbf{do } gc\textbf{ od}| c_0\parallel c_1| c\setminus\alpha$$

$$gc::=b\rightarrow c\quad|\quad b\wedge \alpha?X\rightarrow c\quad|\quad b\wedge \alpha!a\quad|\quad gc_0\fatbar gc_1$$

where $gc_0\fatbar gc_1$ is the alternative construct of $gc_0$ and $gc_1$.

The operational rules of commands:

$$\langle \mathbf{skip},\sigma\rangle\rightarrow\sigma$$

$$\frac{\langle a, \sigma\rangle\rightarrow n}{\langle X:=a,\sigma\rangle\rightarrow\sigma[n/X]}$$

$$\langle \alpha?X,\sigma\rangle\xrightarrow{\alpha?n}\sigma[n/X]$$

$$\frac{\langle a,\sigma\rangle\rightarrow n}{\langle \alpha!a,\sigma\rangle\xrightarrow{\alpha!n}\sigma}$$

$$\frac{\langle c_0,\sigma\rangle\rightarrow\sigma'}{\langle c_0;c_1,\sigma\rangle\rightarrow\langle c_1,\sigma'\rangle}\quad \frac{\langle c_0,\sigma\rangle\rightarrow\langle c_0',\sigma'\rangle}{\langle c_0;c_1,\sigma\rangle\rightarrow\langle c_0';c_1,\sigma'\rangle}$$

$$\frac{\langle gc,\sigma\rangle\rightarrow\langle c,\sigma'\rangle}{\langle\textrm{if }gc\textrm{ fi},\sigma\rangle\rightarrow\langle c,\sigma'\rangle}$$

$$\frac{\langle gc,\sigma\rangle\rightarrow\mathbf{fail}}{\langle\textrm{do }gc\textrm{ od},\sigma\rangle\rightarrow\sigma}\quad\frac{\langle gc,\sigma\rangle\rightarrow\langle c,\sigma'\rangle}{\langle\textrm{do }gc\textrm{ od},\sigma\rangle\rightarrow\langle c;\textrm{do }gc\textrm{ od},\sigma'\rangle}$$

$$\frac{\langle c_0,\sigma\rangle\xrightarrow{\lambda}\langle c_0',\sigma'\rangle\quad c_0\%c_1}{\langle c_0\parallel c_1,\sigma\rangle\xrightarrow{\lambda}\langle c_0'\parallel c_1,\sigma'\rangle}$$

$$\frac{\langle c_1,\sigma\rangle\xrightarrow{\lambda}\langle c_1',\sigma'\rangle\quad c_0\%c_1}{\langle c_0\parallel c_1,\sigma\rangle\xrightarrow{\lambda}\langle c_0\parallel c_1',\sigma'\rangle}$$

$$\frac{\langle c_0,\sigma\rangle\xrightarrow{\lambda_1}\langle c_0',\sigma'\rangle\quad \langle c_1,\sigma\rangle\xrightarrow{\lambda_2}\langle c_1',\sigma''\rangle}{\langle c_0\parallel c_1,\sigma\rangle\xrightarrow{\{\lambda_1,\lambda_2\}}\langle c_0'\parallel c_1',\sigma'\uplus\sigma''\rangle}$$

$$\frac{\langle c_0,\sigma\rangle\xrightarrow{\alpha!n}\langle c_0',\sigma\rangle\quad \langle c_1,\sigma\rangle\xrightarrow{\alpha?n}\langle c_1',\sigma'\rangle}{\langle c_0\parallel c_1,\sigma\rangle\xrightarrow{\gamma_{\alpha}(n)}\langle c_0'\parallel c_1',\sigma'\rangle}$$

$$\frac{\langle c_0,\sigma\rangle\xrightarrow{\alpha?n}\langle c_0',\sigma'\rangle\quad \langle c_1,\sigma\rangle\xrightarrow{\alpha!n}\langle c_1',\sigma\rangle}{\langle c_0\parallel c_1,\sigma\rangle\xrightarrow{\gamma_{\alpha}(n)}\langle c_0'\parallel c_1',\sigma'\rangle}$$

$$\frac{\langle c,\sigma\rangle\xrightarrow{\lambda}\langle c',\sigma'\rangle}{\langle c\setminus\alpha,\sigma\rangle\xrightarrow{\lambda}\langle c'\setminus\alpha,\sigma'\rangle}\textrm{if }\lambda\equiv\alpha?n\textrm{ and }\lambda\equiv\alpha!n\textrm{ do not hold.}$$

Where $c_0\% c_1$ denotes that $c_0$ and $c_1$ are in race condition.

The operational rules of guarded commands:

$$\frac{\langle b,\sigma\rangle\rightarrow\mathbf{true}}{\langle b\rightarrow c,\sigma\rangle\rightarrow\langle c, \sigma\rangle}$$

$$\frac{\langle gc_0,\sigma\rangle\rightarrow\langle c,\sigma'\rangle}{\langle gc_0\fatbar gc_1,\sigma\rangle\rightarrow\langle c,\sigma'\rangle}\quad \frac{\langle gc_1,\sigma\rangle\rightarrow\langle c,\sigma'\rangle}{\langle gc_0\fatbar gc_1,\sigma\rangle\rightarrow\langle c,\sigma'\rangle}$$

$$\frac{\langle b,\sigma\rangle\rightarrow\mathbf{false}}{\langle b\rightarrow c,\sigma\rangle\rightarrow\mathbf{fail}}\quad\frac{\langle gc_0,\sigma\rangle\rightarrow\mathbf{false}\quad \langle gc_1,\sigma\rangle\rightarrow\mathbf{false}}{\langle gc_0\fatbar gc_1,\sigma\rangle\rightarrow\mathbf{fail}}$$

$$\frac{\langle b,\sigma\rangle\rightarrow\mathbf{false}}{\langle b\wedge \alpha?X\rightarrow c,\sigma\rangle\rightarrow\mathbf{fail}}$$

$$\frac{\langle b,\sigma\rangle\rightarrow\mathbf{false}}{\langle b\wedge \alpha!X\rightarrow c,\sigma\rangle\rightarrow\mathbf{fail}}$$

$$\frac{\langle b,\sigma\rangle\rightarrow\mathbf{true}}{\langle b\wedge \alpha?X\rightarrow c,\sigma\rangle\xrightarrow{\alpha?n}\langle c,\sigma[n/X]\rangle}$$

$$\frac{\langle b,\sigma\rangle\rightarrow\mathbf{true}\quad \langle a,\sigma\rangle\rightarrow n}{\langle b\wedge \alpha!a\rightarrow c,\sigma\rangle\xrightarrow{\alpha!n}\langle c,\sigma\rangle}$$

Note that, for true concurrency, we can see that communications, conflictions, and race conditions are solved as follows.

\begin{enumerate}
  \item Communication is explicitly supported in PPL, the two communicating commands $\alpha?X$ and $\alpha!X$ will merge to one communication command $\gamma_{\alpha}(X)$, and the unstructured communication will be eliminated;
  \item Since each command is with a guard, the conflictions among actions can be achieved by set the commands with exclusive guards;
  \item As the operational rules state, the actions in parallel in race condition must be executed sequentially and will cause the non-deterministic execution order. Though the execution order is non-deterministic, by setting appropriate guards to the parallel commands, the final execution configuration can be deterministic.
\end{enumerate}

We can get the following propositions. Where $\sim$ is an equivalence relation on commands by the definition, where $\Sigma$ is the set of states:

\begin{definition}[Equivalence of operational semantics for commands]
$c_0\sim c_1\textrm{ iff } \forall\sigma,\sigma'\in\Sigma, \langle c_0,\sigma\rangle\rightarrow \sigma'\Leftrightarrow\langle c_1,\sigma\rangle\rightarrow \sigma'$
\end{definition}

\begin{proposition}
$c_0\parallel c_1\sim c_1\parallel c_0$, for $c_0,c_1\in\mathbf{Com}$.
\end{proposition}

\begin{proof}
By use of the transition rules of $\parallel$, we can get the following derivations of $c_0\parallel c_1$ for $\forall \sigma\in\Sigma$:

$$\frac{\langle c_0,\sigma\rangle\xrightarrow{c_0}\sigma''\quad \langle c_1,\sigma''\rangle\xrightarrow{c_1}\sigma'}{\langle c_0\parallel c_1,\sigma\rangle\xrightarrow{c_0;c_1}\sigma'}$$

$$\frac{\langle c_1,\sigma\rangle\xrightarrow{c_1}\sigma'''\quad \langle c_0,\sigma'''\rangle\xrightarrow(c_0)\sigma'}{\langle c_0\parallel c_1,\sigma\rangle\xrightarrow(c_1;c_0)\sigma'}$$

$$\frac{\langle c_1,\sigma\rangle\xrightarrow{c_1}\sigma'\quad \langle c_0,\sigma\rangle\xrightarrow{c_0}\sigma''}{\langle c_0\parallel c_1,\sigma\rangle\xrightarrow{\{c_0,c_1\}}\sigma'\uplus\sigma''}$$

And we can get the following derivations of $c_1\parallel c_0$ for $\forall \sigma\in\Sigma$:

$$\frac{\langle c_1,\sigma\rangle\xrightarrow{c_1}\sigma'''\quad \langle c_0,\sigma'''\rangle\xrightarrow{c_0}\sigma'}{\langle c_1\parallel c_0,\sigma\rangle\xrightarrow{c_1;c_0}\sigma'}$$

$$\frac{\langle c_0,\sigma\rangle\xrightarrow{c_0}\sigma''\quad \langle c_1,\sigma''\rangle\xrightarrow{c_1}\sigma'}{\langle c_1\parallel c_0,\sigma\rangle\xrightarrow{c_0;c_1}\sigma'}$$

$$\frac{\langle c_0,\sigma\rangle\xrightarrow{c_0}\sigma'\quad \langle c_1,\sigma\rangle\xrightarrow{c_1}\sigma''}{\langle c_1\parallel c_0,\sigma\rangle\xrightarrow{\{c_0,c_1\}}\sigma'\uplus\sigma''}$$

So, it is obvious that $c_0\parallel c_1\sim c_1\parallel c_0$, for $c_0,c_1\in\mathbf{Com}$, as desired.
\end{proof}

\begin{proposition}
$(c_0\parallel c_1)\parallel c_2\sim c_0\parallel (c_1\parallel c_2)$, for $c_0,c_1,c_2\in\mathbf{Com}$.
\end{proposition}

\begin{proof}
By use of the transition rules of $\parallel$, we can get the following derivations of $(c_0\parallel c_1)\parallel c_2$ for $\forall \sigma\in\Sigma$:

$$\frac{\langle c_0,\sigma\rangle\xrightarrow{c_0}\sigma'\quad \langle c_1,\sigma\rangle\xrightarrow{c_1}\sigma''\quad \langle c_2,\sigma\rangle\xrightarrow{c_2}\sigma'''}{\langle (c_0\parallel c_1)\parallel c_2,\sigma\rangle\xrightarrow{\{c_0,c_1,c_2\}}\sigma'\uplus\sigma''\uplus\sigma'''}$$

And we can get the following derivations of $c_0\parallel (c_1\parallel c_2)$ for $\forall \sigma\in\Sigma$:

$$\frac{\langle c_0,\sigma\rangle\xrightarrow{c_0}\sigma'\quad \langle c_1,\sigma\rangle\xrightarrow{c_1}\sigma''\quad \langle c_2,\sigma\rangle\xrightarrow{c_2}\sigma'''}{\langle c_0\parallel (c_1\parallel c_2),\sigma\rangle\xrightarrow{\{c_0,c_1,c_2\}}\sigma'\uplus\sigma''\uplus\sigma'''}$$

So, it is obvious that $(c_0\parallel c_1)\parallel c_2\sim c_0\parallel (c_1\parallel c_2)$, for $c_0,c_1,c_2\in\mathbf{Com}$, as desired.

For the case of the parallel commands in race condition, we omit it.
\end{proof}

\begin{proposition}\label{SePa3}
For $c_0,c_1,c_2,c_3\in\mathbf{Com}$,

\begin{enumerate}
  \item $(c_0;c_1)\parallel c_2\sim (c_0\parallel c_2);c_1$;
  \item $(c_0;c_1)\parallel (c_2;c_3)\sim (c_0\parallel c_2);(c_1\parallel c_3)$.
\end{enumerate}

\end{proposition}

\begin{proof}
(1) By use of the transition rules of sequence and $\parallel$, we can get the following derivations of $(c_0;c_1)\parallel c_2$:

$$\frac{\langle c_0,\sigma\rangle\xrightarrow{c_0}\sigma'\quad \langle c_2,\sigma\rangle\xrightarrow{c_2}\sigma''}{\langle (c_0;c_1)\parallel c_2,\sigma\rangle\xrightarrow{\{c_0,c_2\}}\langle c_1,\sigma'\uplus\sigma''\rangle}$$

And we can get the following derivations of $(c_0\parallel c_2);c_1$:

$$\frac{\langle c_0,\sigma\rangle\xrightarrow{c_0}\sigma'\quad \langle c_2,\sigma\rangle\xrightarrow{c_2}\sigma''}{\langle (c_0\parallel c_2);c_1,\sigma\rangle\xrightarrow{\{c_0,c_2\}}\langle c_1,\sigma'\uplus\sigma''\rangle}$$

So, it is obvious that $(c_0;c_1)\parallel c_2\sim (c_0\parallel c_2);c_1$, for $c_0,c_1,c_2\in\mathbf{Com}$, as desired.

(2) By use of the transition rules of sequence and $\parallel$, we can get the following derivations of $(c_0;c_1)\parallel (c_2;c_3)$:

$$\frac{\langle c_0,\sigma\rangle\xrightarrow{c_0}\sigma'\quad \langle c_2,\sigma\rangle\xrightarrow{c_2}\sigma''}{\langle (c_0;c_1)\parallel (c_2;c_3),\sigma\rangle\xrightarrow{\{c_0,c_2\}}\langle c_1\parallel c_3,\sigma'\uplus\sigma''\rangle}$$

And we can get the following derivations of $(c_0\parallel c_2);(c_1\parallel c_3)$:

$$\frac{\langle c_0,\sigma\rangle\xrightarrow{c_0}\sigma'\quad \langle c_2,\sigma\rangle\xrightarrow{c_2}\sigma''}{\langle (c_0\parallel c_2);(c_1\parallel c_3),\sigma\rangle\xrightarrow{\{c_0,c_2\}}\langle c_1\parallel c_3,\sigma'\uplus\sigma''\rangle}$$

So, it is obvious that $(c_0;c_1)\parallel (c_2;c_3)\sim (c_0\parallel c_2);(c_1\parallel c_3)$, for $c_0,c_1,c_2,c_3\in\mathbf{Com}$, as desired.
\end{proof}

\begin{proposition}\label{skipP3}
$c\parallel \mathbf{skip}\sim c$, for $c\in\mathbf{Com}$.
\end{proposition}

\begin{proof}
By use of the transition rules of $\mathbf{skip}$ and $\parallel$, we can get the following derivations of $c\parallel \mathbf{skip}$:

$$\frac{\langle c,\sigma\rangle\xrightarrow{c}\sigma'\quad \langle \mathbf{skip},\sigma\rangle\rightarrow\sigma}{c\parallel \mathbf{skip},\sigma\rangle\xrightarrow{c}\sigma'\uplus\sigma}$$

And it is obvious that:

$$\frac{\langle c,\sigma\rangle\xrightarrow{c}\sigma'}{c,\sigma\rangle\xrightarrow{c}\sigma'}$$

For $\sigma'\uplus\sigma=\sigma'$, it is obvious that $c\parallel \mathbf{skip}\sim c$, for $c\in\mathbf{Com}$, as desired.
\end{proof}

\begin{lemma}\label{skipP23}
For  $c_0,c_1\in\mathbf{Com}$,

\begin{enumerate}
  \item $c_0\parallel c_1\sim c_0\parallel (\mathbf{skip};c_1)\sim c_0;c_1$;
  \item $c_0\parallel c_1\sim (\mathbf{skip};c_0)\parallel c_1\sim c_1;c_0$.
\end{enumerate}
\end{lemma}

\begin{proof}
It is obvious by Proposition \ref{SePa3} and \ref{skipP3}.
\end{proof}

From Lemma \ref{skipP23}, we can see that the execution orders of $c_0\parallel c_1$ cause non-determinism, they can be executed in any sequential order or in parallel simultaneously. But, with the assistance of guards, the final states after the execution of $c_0\parallel c_1$ can be deterministic.

\begin{proposition}\label{comlaws}
For $c,c_0,c_1\in\mathbf{Com}$,

\begin{enumerate}
  \item $\alpha!n\parallel \alpha?n\sim \gamma_{\alpha}(n)$;
  \item $(c;\alpha!n)\parallel \alpha?n\sim c;\gamma_{\alpha}(n)$;
  \item $(c;\alpha?n)\parallel \alpha!n\sim c;\gamma_{\alpha}(n)$;
  \item $(c_0;\alpha!n)\parallel (c_1;\alpha?n)\sim c_0\parallel c_1;\gamma_{\alpha}(n)$;
  \item $(c_0;\alpha?n)\parallel (c_1;\alpha!n)\sim c_0\parallel c_1;\gamma_{\alpha}(n)$.
\end{enumerate}
\end{proposition}

\begin{proof}
By use of the transition rules of $\parallel$, we can prove the above equations.
\end{proof}

From Proposition \ref{comlaws}, we can see that communications among parallel branches are eliminated and the parallelism is structured.

\subsection{Conflictions}\label{conf}

Corresponding to Figure \ref{EOTC3}, the program is:

$(1;(\textrm{if (b) then }(2;3)))\parallel (4;(\textrm{if (}\neg\textrm{ b) then }(5;6)))$

Corresponding to Figure \ref{EOTC4}. The program is:

$\textrm{if (b) then }(1;2;3)\parallel 4 \textrm{ else } 1\parallel (4;5;6)$

We can prove that the above two programs are equivalent, and the confliction between parallel branches is eliminated and the parallelism is structured.

\subsection{Structuring Algorithm}\label{sa}

By PPL, We know that the truly concurrent graph can be structured. As an implementation-independent language, the structuring algorithm of PPL can be designed as follows:

\begin{enumerate}
  \item Input the unstructured truly concurrent graph;
  \item By use of PPL, implement the graph as a program;
  \item By use of the laws of PPL, structure the program.
\end{enumerate}

\newpage

\end{document}